\newtheorem{theorem}{Theorem}
\newtheorem{lemma}[theorem]{Lemma}
\newtheorem{corollary}[theorem]{Corollary}
\newtheorem{example}[theorem]{Example}
\par\vspace{4mm}}
\newenvironment{proof}{\noindent{\em Proof.}}%
        {\hspace*{\fill}$\Box$\par\vspace{4mm}}
\def\eps{\epsilon}
\def\eps{\varepsilon}
\def\np{${\mathcal{NP}}$}
\def\Re{{\mathbb{R}}}
\def\p{{\mathcal{P}}}
\def\pof{{\mathrm{PoF}}} 
\def\A{a} 
\def\B{b} 
\def\z{U} 
\def\opt{x^*} 
\def\bestA{\hat a}
\def\bestB{\hat b}
\def\best{\hat u} 
\begin{document}

\title{Price of Fairness for Allocating a Bounded Resource\tnoteref{titlenote}}
\tnotetext[titlenote]{
Some of the results presented in this manuscript have been introduced in \cite{bib:sagt2015} as a contribution in the Proceedings of the 8th International Symposium on Algorithmic Game Theory (SAGT 2015) in Saarbr\"ucken, Germany, Sept. 28--30, 2015.
}
 
\author{Gaia~Nicosia}
\address{Dipartimento di Ingegneria, Universit\`a  degli Studi Roma Tre, Italy,
   \texttt{nicosia@ing.uniroma3.it}}
   
\author{Andrea Pacifici}
\address{Dipartimento di Ingegneria Civile e Ingegneria Informatica, Universit\`a degli Studi di Roma ``Tor Vergata'', Italy,
   \texttt{andrea.pacifici@uniroma2.it}}

\author{Ulrich Pferschy}
\address{Department of Statistics and Operations Research,
University of  Graz, Austria,
   \texttt{pferschy@uni-graz.at}}
\begin{abstract}
In this paper we study the problem of allocating a scarce resource among several players (or agents). A central decision maker wants to maximize the total utility of all agents. However, such a solution may be unfair for one or more agents in the sense that it can be achieved through a very unbalanced allocation of the resource. On the other hand fair/balanced allocations may be far from optimal from a central point of view. So, in this paper we are interested in assessing the quality of fair solutions, i.e.\ in measuring the system efficiency loss under a fair allocation compared to the one that maximizes the sum of agents utilities. This indicator is usually called the \emph{Price of Fairness} and we study it under three different definitions of fairness, namely maximin, Kalai-Smorodinski and proportional fairness.

Our results are of two different types. We first formalize a number of properties holding for any general multi-agent problem without any special assumption on the agents utilities. Then we introduce an allocation problem, where each agent can consume the resource in given discrete quantities (items).
In this case the maximization of the total utility is given by a Subset Sum Problem. For the resulting \emph{Fair Subset Sum Problem}, in the case of two agents, we provide upper and lower bounds on the Price of Fairness as functions of an upper bound on the items size.
\end{abstract}
\begin{keyword}
subset sum problem, fairness, multi-agent systems, bicriteria optimization.
\end{keyword}

\maketitle

\section{Introduction}
\label{sec:intro}

Fair allocation problems arise naturally in various real-world contexts
and are the object of study in several research areas such as 
mathematics, game theory and operations research.
These problems consist in sharing resources among several self-interested parties
(players or agents)
so that each party receives his/her due share.
At the same time the resources should be utilized in an efficient way
from a central point of view.
A wide variety of fair allocation problems have been addressed in the literature depending on the resources to be shared, the fairness criteria, the preferences of the agents, and other aspects for evaluating the quality of the allocation.

In this paper we  focus on a specific discrete allocation problem, introduced briefly in \cite{bib:sagt2015},
that can be seen as a multi-agent \emph{subset sum problem}:
A common and bounded resource (representing e.g., bandwidth, budget, space, etc.) is to be shared among a set of agents each owning a number of indivisible items. The items require
a certain amount of the resource, called item weight and the problem  consists in selecting, for each agent, a subset of items so that the sum
 of all selected items weights  is not larger than a given upper bound expressing the resource capacity.
We assume that the utility function
of each agent consists of the sum of weights over all selected items of that agent.
In this context, maximizing the resource utilization is equivalent to determining the solution of a classical, i.e.\ single agent,  subset sum problem.
Since we are interested in  solutions implementing some fairness criteria, we call the addressed problem the {\em Fair Subset Sum Problem} (FSSP).

Throughout the paper, as usual with allocation problems, 
we  consider for each agent a {\em utility function}
which assigns for any feasible solution a certain utility value to that agent.
We assume that the system utility (e.g.\ the overall resource utilization in an allocation problem)
is given by the sum of utilities over all agents.
This assumption of additivity 
appears frequently in quantitative decision analysis (cf.\ e.g.~\cite{zhsh15}).
The solution is chosen by a central decision maker
while the agents play no active role in the process.
The decision maker is confronted with two objectives:
On one hand, there is the maximization of the sum of utilities over all agents.
On the other hand, such a \emph{system optimum} 
may well be highly unbalanced. For instance, it could assign all resources to one agent only and this may have severe negative effects in many application scenarios.
Thus, it would be beneficial to reach a certain degree of agents satisfaction
by implementing some criterion of {fairness}.

Clearly, the maximum utility taken only over all fair solutions 
will in general deviate from the system optimum and thus incurs 
a loss of utility for the overall system.
In this paper we want to analyze this loss of utility implied by a fair solution
from a worst-case point of view.
This should give the decision maker a guideline or quantified argument
about the {\em cost of fairness}.
A standard indicator for measuring this {system efficiency loss}
is given by the relative loss of utility of a fair solution compared to the
system optimum in a worst-case sense,
which is called \emph{Price of Fairness} ($\pof$).

The concept of fairness is not uniquely defined in the scientific literature
since it strongly depends on the specific problem setting and also on the agents perception of what a fair solution is.
In this paper we consider three types of fair solutions,
namely proportional fair, maximin and Kalai-Smorodinski solutions
(definitions are  given in Section~\ref{sec:notation}).
 Moreover, we  formalize several properties of fair solutions---some of which have been already investigated in some specific contexts---holding for any general multi-agent problem without any specific assumption on the utility sets, contrarily from most of the scientific  literature on allocation or multi-agent problems.
The most significant part of this work is devoted to completely characterizing $\pof$ for the Fair Subset Sum Problem with two agents for the three above mentioned fairness concepts.

\subsection{Related literature}
\label{sec:lit}

Caragiannis et al.~\cite{bib:ckkk2009} were the first to introduce the concept of $\pof$
in the context of fair allocation problems:
In particular, they compare the value of total agents utility
in a global optimal solution 
with the maximum total utility obtained over all fair solutions
(they make use of several notions of fairness namely, proportionality, envy-freeness and equitability).
In \cite{bib:bft2011}, Bertsimas et al.\ focus on  proportional fairness and maximin fairness and
provide a tight characterization of the Price of Fairness for a broad family of allocation problems
with compact and convex agents utility sets.

The Price of Fairness measures the inefficiency implied by fairness constraints, similarly to the utility loss implied by selfish behavior of agents and quantified by the Price of Anarchy (see, e.g.~\cite{bib:PoA}).
From a wider perspective, many authors have dealt with the problem of balancing global efficiency and fairness
in terms of defining appropriate  models or designing suitable objective functions or determining tradeoff solutions
(see for instance \cite{bib:bft2012,bib:bw2002,bib:k2009}).
A recent survey on the operations research literature that considers the tradeoff between efficiency and equity is \cite{bib:km2015}.

\smallskip
The subset sum problem considered in this paper is related to the so-called
\emph{knapsack sharing problem}  in which different agents try to fit their own
items in a common knapsack (see for instance \cite{fy04,hifi2005}).
The problem consists in determining the solution that tries to
balance the profits among the agents by
maximizing the objective of the agent with minimum profit.
As we will see, this problem is equivalent to determining a specific type of fair solution, known as maximin solution in the literature.
%
%
 Another special knapsack problem has been addressed in \cite{bib:k2009}, where
a bi-objective extension of the Linear Multiple Choice Knapsack (LMCK) Problem  is considered. The author wants to maximize the profit while minimizing the maximum
difference between the resource amounts allocated to any two agents.

Fairness concepts have been widely studied in the context
of {\em fair division} problems, see e.g.~\cite{bib:bt1996}
for a general overview, and in many other application scenarios (mostly in  telecommunications systems \cite{bib:frf2015,bib:kmt98} and, more recently, in cloud computing  \cite{bib:gzhkss11,bib:pps15}). In particular, in \cite{bib:pps15} the authors point out that resource allocation in computing systems
is one of the hottest topics of interest for both computer scientists and economists.

Fair division includes a great variety of different problems in which a set of goods
has to be divided among several agents each having its own preferences.
The goods to be divided can be
($i$) a single heterogeneous good as in the classical cake-cutting problem (see e.g.~\cite{bib:bt1996} and \cite{audo10}, which considers price of fairness 
in the line of \cite{bib:ckkk2009}),
($ii$) several divisible goods as in resource allocation problems (see e.g.~\cite{bib:pps15}),
or ($iii$) several indivisible goods (see e.g.~\cite{bib:kla10}).
The fair subset sum problem we address is strongly related to fair division. It can be seen either as a single resource allocation problem in which the resource can be only allocated in predetermined blocks/portions (the item weights) or as a special case of the indivisible goods problem in which, due to an additional capacity constraint, only a selection of the goods can be allocated.

A different but related scenario is presented in \cite{bib:budgetgames},
where a game is considered in which several agents own different tasks
each requiring certain resources.
The agents compete for the usage of the scarce resources and have to select the tasks to be allocated.

\bigskip
The paper is organized as follows. The next section provides the basic definitions,
the formal statements for the problems studied (Section \ref{sec:def_ss}) and a summary of our results (Section \ref{sec:summary}).
Some  properties which hold for any general $k$-agent problem are given in Section \ref{sec:general},
where the special case of  problems with a symmetric structure is also addressed. In Section \ref{sec:subsetsum} we consider the fair subset sum problem with two agents in two different scenarios. In particular, in Section \ref{sec:separate} we present the results concerning the case in which the two agents have two disjoint sets of items, while in Section \ref{sec:shared} the case in which the agents share a common set of items is considered.
Finally, in Section \ref{sec:conc} some conclusions are drawn.

\section{Notation and problem statement}
\label{sec:notation}

Consider a general multi-agent problem $\p$, e.g.\ some type of resource allocation problem,
in which we are given a set of $k$ agents $\{ 1, 2, \ldots , k \}$ and let $X$ be the set of all feasible solutions, e.g.\ allocations.
Each agent $j$ has a utility function $u_j: X \to \Re^+$.
If two solutions $x$ and $y$ yield the same utility for all agents,
i.e.\ $u_j(x)=u_j(y)$ for all $j$,
then we are not interested in distinguishing between them and we  consider
$x$ and $y$ as equivalent.
Note that  we do not make any assumption on the set $X$ nor on the functions $u_j$.

We define the above problem to be \emph{symmetric} and  denote it by $\p_{sym}$,
if for any solution $x\in X$
and for any permutation $\pi$ of the $k$ agents there
always exists a solution $y\in X$ such that
$u_j(x)=u_{\pi(j)}(y)$ for all $j=1, \ldots, k$.
In other words, permuting among the agents the utilities
gained from a feasible solution
in a symmetric problem always results again in a feasible solution.

\smallskip
The global or social utility $\z(x)$ of a solution $x\in X$ is
the sum of the agents utilities given by
$\z(x)= \sum_{j=1}^k u_j(x)$.
The globally optimal solution $\opt$
is called \emph{system optimum}, its value is given by
$\z^*=\z(\opt)= \max_{x\in X} \left\{\sum_{j=1}^k u_j(x)\right\}$.

In addition to the system optimum solution we  consider
{\em fair solutions}, which focus on the individual utilities obtained by each agent.
In this paper, we  use three different notions of fairness formally defined below. Other notions of fairness, such as envy-freeness or equitability, are not considered here.

 \begin{itemize}
 \item  {\em Maximin fairness}:
Based on the principle of {\em Rawlsian justice}~\cite{bib:raw71},
a solution is sought such that even the least happy agent gains as much as possible,
i.e.\ the agent obtaining the lowest utility, still receives the highest possible utility.

Formally, we are looking for a solution $x_{MM}$ maximizing $f$,
such that $u_j(x_{MM}) \geq f$ for all $j= 1, \ldots ,k $.
Equivalently, we are looking for a solution
$x_{MM}\in X$ such that
\begin{equation}\label{eq:defMM}
x_{MM} =\arg\max_{x\in X} \min_{j= 1, \ldots ,k}\left\{u_j(x)\right\}.
\end{equation}

We  only consider Pareto efficient solutions
to avoid dominated solutions with the same objective function value.
Clearly, this does not guarantee the uniqueness of solutions.

\item {\em Kalai-Smorodinski fairness} \cite{bib:ks1975}:
A drawback of maximin fairness is the fact that an agent is guaranteed a certain level
of utility, thus possibly incurring a significant loss to the other agents,
even though the agent would not be able to gain a substantial utility when acting on its own.
In the Kalai-Smorodinski fairness concept we modify the notion of maximin fairness
by maximizing the minimum relative to the best solution that an agent could obtain.

Formally, let $\best_j =\max\{u_j(y) \mid y \in X\}$
be the maximum utility value each agent $j=1,\ldots ,k$ can get over all feasible solutions.
A Kalai-Smorodinski fair solution  $x_{KS}$
minimizes $f$, such that $f \geq \frac{\best_j -u_j(x_{KS})}{\best_j}$ for all $j= 1, \ldots , k $.
Equivalently, we are looking for a solution $x_{KS}\in X$ such that
\begin{equation}\label{eq:defKS}
x_{KS} =\arg\max_{x\in X} \min_{j= 1, \ldots ,k}\left\{\frac{u_j(x)}{\best_j}\right\}.
\end{equation}
As before, we only consider Pareto efficient solutions.
Clearly, if all agents can reach the same utility,
i.e.\ $\best_j=\best_1$ for all $j$, then $x_{KS}=x_{MM}$.

\item {\em Proportional fairness} \cite{bib:kmt98}:
A solution is {\em proportional fair}, if any other solution does not give a total relative
improvement for a subset of agents which is larger than the total relative loss inflicted on the other agents.
Note that a Pareto-dominated solution can never be proportional fair.

Formally, we are looking for a solution $x_{PF}\in X$
with $u_j(x_{PF})>0$ for all $j$,
such that
for all feasible solutions $y \in X$
    \begin{equation}\label{eq:def_propfair}
    \sum_{j=1}^k\frac{u_j(y) -u_j(x_{PF})}{u_j(x_{PF})}  \leq 0
		\quad \Longleftrightarrow\quad
		    \sum_{j=1}^k\frac{u_j(y) }{u_j(x_{PF})}  \leq k
    \end{equation}
\end{itemize}
While for any instance of the problems considered in this paper
maximin and Kalai-Smorodinski fair solutions always exist,
a proportional fair solution might not
(see e.g.\ Example~\ref{ex:ssppof}).
On the other hand, as we  show in the sequel, proportional fair solutions are always unique, if they exist.
In contrast, it should be noted that for maximin fairness and also for Kalai-Smorodinski fairness schemes, 
there may exist several different fair solutions.
In the literature, these two maximin concepts are sometimes extended
to a lexicographic maximin principle
(i.e.\ among all maximin solutions, maximize the second lowest utility value, and so on)
which still does not guarantee uniqueness of solutions.
However, this will not be a relevant issue for this paper.
In fact, our restriction to Pareto efficient solutions implies
the lexicographic principle for $k=2$ agents.

It is well known that in case of convex utility sets, the proportional fair solution
is a Nash solution, i.e.\ the solution maximizing the product of agents utilities
(cf.~\cite{bib:bft2011}).
Even for the general 
utility sets treated in this paper it is shown
in Theorem~\ref{th:nash}
that if a proportional fair solution exists
then it is the one that maximizes the product of utilities.
Observe however that the opposite is, in general, not true,
since a proportional fair solution does not always exist.

\smallskip
In order to measure the loss of total utility or overall welfare of a fair solution compared to the system optimum, we study the {Price of Fairness}  as defined in \cite{bib:bft2011}:
Given an instance $I$ of our general problem, let $\z_I(x)$ be the value of a fair solution $x$ and $\z^*_I$ be the system optimum value.
The Price of Fairness, $\pof$, is defined as follows:
\begin{equation}\label{eq:defpof}
 \pof  = \sup_I  \frac{\z^*_I-\z_I(x)}{\z^*_I}
\end{equation}
Obviously, $\pof \in [0,1]$.

Whenever it is important to distinguish among the different fairness concepts,
we denote the fair solutions as $x_{MM}$, $x_{KS}$ and $x_{PF}$
corresponding to maximin, Kalai-Smorodinski and proportional fair solutions
and their associated Price of Fairness as $\pof_{MM}$, $\pof_{KS}$ and $\pof_{PF}$.

\subsection{The Fair Subset Sum Problem}
\label{sec:def_ss}

In this paper, most of the results concern a specific resource allocation problem
in which $k=2$ agents $A$ and $B$ compete for the usage of a common resource
with a given unitary capacity $c=1$. Note that a different variant of a game-theoretic setting of SSP with two agents was recently considered in~\cite{dnps13}.
Here, we  consider two scenarios:
\begin{itemize}
\item {\em Separate items.} Each agent owns a set of  $n$ items
 having nonnegative weights $a_1, a_2, \ldots,  a_n $ for agent $A$
and $b_1, b_2, \ldots,  b_n $ for agent $B$.
Each agent can only use its own items.\footnote{The number of items for each agent is irrelevant, however it is natural to assume that both have the same number of items $n$.}
\item {\em Shared items.} There is only one set of items with nonnegative weights
$w_1, w_2, \ldots, w_n$. Both agents can access these items.
\end{itemize}
In the remainder of the paper we will frequently identify an item
by its weight $a_i$, $b_i$, or $w_i$.
Every solution $x$ of the problem consists of two (possibly empty)
subsets of items  $x^A$ and $x^B$, one for each agent.
In the separate items case,
$x^A\subseteq \{a_1, a_2, \ldots,  a_n \}$ and $x^B \subseteq \{b_1, b_2, \ldots,  b_n \}$, while  in the shared items case, $x^A, x^B \subseteq \{w_1, w_2, \ldots, w_n\}$
and $x^A \cap x^B =\emptyset$.
For every solution $x=(x^A, x^B )$  we  denote the total weight of
$x^A$ (resp.\ $x^B$) by $a(x)$ (resp.\ $b(x)$).
The utility of a solution $x$ for each agent
is given simply by its total allocated weight,
i.e.\ $u_1(x)=a(x)$ and $u_2(x)=b(x)$.
The maximum utility reachable for each agent will be denoted by
$\best_1=\bestA$ resp.\ $\best_2=\bestB$.
The crucial constraint for the allocation task consists of a capacity bound
on the total weight of items given to both agents.
By scaling we can assume without loss of generality that this bound is $1$.
Thus we have that every solution $x \in X$  must fulfill:
\begin{equation}\label{eq:capacity}
\z(x)=a(x)+b(x) 
\leq 1
\end{equation}
Obviously, the computation of the system optimum $\opt$
corresponds to the solution of a classical subset sum problem (SSP) \cite{kpp04},
where a subset of items from a given ground set is sought with total weight
as large as possible, but not exceeding the given capacity $c=1$.

\begin{quote} {\sc Fair Subset Sum Problem} (FSSP): Given a set of items (shared or separate) having nonnegative weights and a fairness criterion $F$, 
find a solution $x_F=(x^A, x^B )$  such that
(\ref{eq:capacity}) is satisfied and $x_F$ is fair.
\end{quote}
In this paper we are not addressing the problem of finding fair solutions, rather in characterizing the Price of Fairness in different cases.
From a computational point of view, the \np-hardness of FSSP follows immediately
from the complexity of SSP.
However, it is easy to design pseudopolynomial dynamic programming algorithms
for computing all Pareto efficient solutions and, therefore, the solutions for the three fairness criteria.
A sketch of such a dynamic program is given in Section~\ref{sec:conc}.

In Section \ref{sec:subsetsum}, we provide several bounds on $\pof$ for FSSP.
As we will see, it is easy to provide worst case instances with $\pof=1$, corresponding to pathological  instances in which items weights are either very large (e.g.~$1$) or  very small.
Also in the area of packing problems, similar pathological instances are often used to derive worst case results.
To avoid such unrealistic settings, for many bin-packing heuristics
the worst-case ratios are also studied subject to an upper bound on
the size of the maximum item weight and expressing these ratios as a function of this parameter, see e.g.~\cite[Sec.~2.2]{hoch97}.
 So, in Section \ref{sec:subsetsum}, we explore the same direction and
study the Price of Fairness restricted to instances with
an imposed upper bound
$\alpha\leq 1$ on all item weights.

\subsection{Summary of Results}\label{sec:summary}

In Section \ref{sec:general}, we provide some basic, yet very general results
for proportional fair solutions
valid for any  $k$-agent  problem.
In particular, we show that if there exists a proportional fair solution,
then such a solution is unique (recall that two solutions having the same utilities values for each agent are considered equivalent) and maximizes the product of agents utilities.
Similar results were derived in different contexts, here we provide simple proofs holding in a more general setting.  (For the readers' convenience these proofs are reported in the Appendix.)

Moreover, we present a general upper bound on the Price of Fairness for
any proportional fair solution, namely $\pof_{PF}\leq \frac {k-1}{k}$,
and compare this bound to the results in~\cite{bib:bft2011}.
Additionally, for two agents it is possible to show that the global utility of a proportional fair solution (if it exists) is always greater or equal than that of a maximin fair solution. This is not true anymore as soon as the number of agents becomes three. We also show that when dealing with Kalai-Smorodinski fair solutions, even when there are only two agents, no dominance relations can be established with respect to other concepts of fairness.

When the problem is symmetric, we give a full characterization of proportional fair solutions
by showing that if such a fair solution exists then it also system optimal and all agents get the same utility value.

\begin{table}[hp]
\begin{center}
\caption{Summary of results for FSPP on $\pof_{MM}$ and $\pof_{KS}$ in the separate items sets case.}\label{tab:res_pof_MM}
{
\begin{tabular}{ccc} \hline
 $\alpha$ &  Lower Bound on  $\pof_{MM}$  and $\pof_{KS}$& Upper Bound on  $\pof_{MM}$ and $\pof_{KS}$\\ \hline
 $ 1 $         & $ 1$ (Ex. \ref{ex:ssppof}) & 1 \\
 $[2/3,1]$    & $2-1/\alpha$  (Ex.~\ref{ex:ssppoflarge}) &  $2-1/\alpha$ (Thm.~\ref{th:ssppof} and \ref{th:ssppofks})\\
 $[1/2, 2/3]$ & $1/2$  (Ex.~\ref{ex:pof_small_PF}) & $1/2$   (Thm.~\ref{th:ssppof} and \ref{th:ssppofks}) \\
 $(0,1/2]$    & $\frac{1}{ \lceil \frac 1 {\alpha} \rceil } $ (Ex.~\ref{ex:pof_small_PF}) &  $\alpha $  (Thm.~\ref{th:ssppof} and \ref{th:ssppofks}) \\
\hline
\end{tabular}
} 
\end{center}
\end{table}

The main body of this paper concerns the FSSP with two agents.
The corresponding results are summarized in three tables.
Tables~\ref{tab:res_pof_MM} and \ref{tab:sep_PF} concern the separate item case under the three fairness schemes
as treated in Section~\ref{sec:separate}.
We give lower and upper bounds on $\pof_{MM}$, $\pof_{PF}$, and $\pof_{KS}$
depending on an upper bound $\alpha$ on all item weights.
Finally, Table~\ref{tab:shared} refers to the separate items case, in which
maximin and Kalai-Smorodinski fair solutions coincide and a proportional fair solution
is optimal, if it exists (see Section~\ref{sec:shared}). The results reported in the tables are also illustrated in Figure~\ref{fig:poftrends}.

\begin{table}[hptb]
\begin{center}
\caption{Summary of results for FSPP on  $\pof_{PF}$ in the separate items sets case.}\label{tab:sep_PF}
{
\begin{tabular}{ccc} \hline
 $\alpha$ &  Lower Bound on  $\pof_{PF}$ & Upper Bound on  $\pof_{PF}$ \\ \hline
$[1/2, 1]$ & $1/2$ (Ex.~\ref{ex:pof_small_PF}) &  $1/2$ (Thm.~\ref{th:prop_fair_k} or Cor.~\ref{th:cor_pofPFlessMM})\\
  $(0,1/2]$ &  $\frac{1}{ \lceil \frac 1 {\alpha} \rceil } $ (Ex.~\ref{ex:pof_small_PF}) & $\alpha$  (Lemma~\ref{lem:lessthanalpha} or Cor.~\ref{th:cor_pofPFlessMM})\\
\hline
\end{tabular}
} 
\end{center}
\end{table}

\begin{table}[htpb]
\begin{center}
\caption{Summary of results for FSPP in the shared items case
(here $\pof=\pof_{MM}=\pof_{KS}$ and $\pof_{PF}(\alpha)=0$, if $x_{PF}$ exists).}\label{tab:shared}
{
\begin{tabular}{ccc} \hline
  $\alpha$ &  Lower Bound on  $\pof$ & Upper Bound on  $\pof$ \\ \hline
  $1$     & $1$    (Ex. \ref{ex:ssppof})  & 1 \\
  $[2/3, 1]$ & $2\alpha -1$ (Ex.~\ref{ex:ssppoflargeshared}) & $2\alpha -1$ (Thm.~\ref{th:ssppofshared})\\
  $[1/3,2/3]$ & $1/3$ (Ex.~\ref{ex:shared_pof_small_alpha}) & 1/3 (Thm.~\ref{th:ssppofshared})\\
$(0,1/3]$ & $\frac {1} {1+ 2 \lceil \frac 1 {2\alpha}\rceil}$ (Ex.~\ref{ex:shared_pof_small_alpha}) & $\alpha$ (Thm.~\ref{th:ssppofshared}) \\
\hline
\end{tabular}
} 
\end{center}
\end{table}

\section{General results}\label{sec:general}

Hereafter, we present some simple, yet  general results
for different fair solution concepts.
Some of them may have been stated in different application contexts.
(For instance in  \cite{kyo12} a detailed discussion on the properties of proportional fair solutions is presented.)
However, to the best of our knowledge, they have not been previously formalized for a general multi-agent problem without any assumption on the utility sets.
We start by showing that if there exists a proportional fair solution, then it is unique,
i.e.\ any two proportional fair solutions must be equivalent. This result is known in different specific contexts (e.g.\ in telecommunications systems \cite{bib:kmt98} or in convex allocation problems \cite{bib:bft2011}),  in the Appendix  we provide a simple but general proof.

\begin{theorem}\label{thm:PFequal}
If two proportional fair solutions $x_{PF}$ and $y_{PF}$  exist, then $u_j(x_{PF})=u_j(y_{PF})$ for all $j=1,\ldots, k$.
\end{theorem}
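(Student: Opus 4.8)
The plan is to argue by contradiction, assuming that $x_{PF}$ and $y_{PF}$ are both proportional fair but differ in the utility profile of at least one agent. The key tool is the defining inequality~(\ref{eq:def_propfair}): applying it once with $x_{PF}$ playing the role of the fair solution and $y_{PF}$ the competing solution $y$, and then a second time with the roles reversed. This yields the two inequalities $\sum_{j=1}^k u_j(y_{PF})/u_j(x_{PF}) \le k$ and $\sum_{j=1}^k u_j(x_{PF})/u_j(y_{PF}) \le k$, both of which are well-defined since proportional fairness requires all utility values to be strictly positive.

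The core of the argument is then elementary: adding these two inequalities gives
\begin{equation*}
\sum_{j=1}^k \left( \frac{u_j(y_{PF})}{u_j(x_{PF})} + \frac{u_j(x_{PF})}{u_j(y_{PF})} \right) \le 2k.
\end{equation*}
But for any positive reals $s,t$ one has $s/t + t/s \ge 2$, with equality if and only if $s=t$. Hence each of the $k$ summands on the left is at least $2$, so the total is at least $2k$; combined with the upper bound $2k$ this forces every summand to equal exactly $2$, and therefore $u_j(x_{PF}) = u_j(y_{PF})$ for every $j = 1, \ldots, k$. This is precisely the claimed equivalence of the two solutions.

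The only subtlety — and the step I would be most careful about — is making sure the two instances of~(\ref{eq:def_propfair}) are legitimately applicable: this requires that $y_{PF} \in X$ (so it is an admissible competitor when testing $x_{PF}$) and symmetrically $x_{PF} \in X$, together with strict positivity of all utilities at both solutions, which is built into the definition of a proportional fair solution. No assumption on $X$ or on the structure of the $u_j$ is needed beyond this, which is exactly why the statement holds in full generality. I would also remark, once the proof is complete, that this immediately justifies treating "the" proportional fair solution as well-defined whenever it exists, and that it dovetails with Theorem~\ref{th:nash} characterizing it as the unique maximizer of the product of utilities.
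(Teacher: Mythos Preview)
Your proposal is correct and follows essentially the same argument as the paper's proof: apply the proportional-fairness inequality~(\ref{eq:def_propfair}) in both directions, add the resulting bounds to obtain $\sum_j \bigl(\phi_j + 1/\phi_j\bigr) \le 2k$ where $\phi_j = u_j(x_{PF})/u_j(y_{PF})$, and invoke the elementary inequality $t + 1/t \ge 2$ with equality iff $t=1$ to force $\phi_j = 1$ for all $j$. Your added remarks about strict positivity and membership in $X$ are helpful clarifications but do not change the substance of the argument.
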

The following theorem shows that a proportional fair solution is also a Nash solution, i.e. it is a utility product maximizer (the proof can be found in the Appendix).
A similar result is well-known for convex utility sets but we
are not aware of such a statement for general 
multi-agent problems.
It is clear that, in general, a Nash  solution  is not necessarily proportional fair, since proportional fair solutions might not exist.

\begin{theorem}\label{th:nash}
If a proportional fair solution $x_{PF}$ exist, then it maximizes the product of agents utilities, i.e.\
$$\prod_{j=1}^k u_j(x_{PF}) \geq \prod_{j=1}^k u_j(x) \quad \forall x\, \in X.$$
\end{theorem}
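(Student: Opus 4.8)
The plan is to deduce the product-maximization property directly from the defining inequality of proportional fairness via the arithmetic--geometric mean (AM--GM) inequality. Fix an arbitrary feasible solution $x\in X$. Since $x_{PF}$ is proportional fair, we have $u_j(x_{PF})>0$ for every agent $j$, so the ratios $r_j:=u_j(x)/u_j(x_{PF})$ are well defined and nonnegative, and the characterization $(\ref{eq:def_propfair})$ applied with $y=x$ gives $\sum_{j=1}^k r_j\le k$.

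Next I would apply AM--GM to the nonnegative numbers $r_1,\ldots,r_k$: their geometric mean is bounded above by their arithmetic mean, hence
$$\left(\prod_{j=1}^k r_j\right)^{1/k}\;\le\;\frac1k\sum_{j=1}^k r_j\;\le\;\frac{k}{k}\;=\;1.$$
Raising both sides to the $k$-th power yields $\prod_{j=1}^k r_j\le 1$, that is, $\prod_{j=1}^k u_j(x)\le\prod_{j=1}^k u_j(x_{PF})$. Since $x\in X$ was arbitrary, this is exactly the asserted inequality.

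It is worth remarking that the argument goes through even if $u_j(x)=0$ for one or more agents: in that case the left-hand side is $0$ and the inequality is trivial, and AM--GM remains valid for nonnegative arguments in any event. The only place where positivity is genuinely used is in the denominators $u_j(x_{PF})$, which is guaranteed precisely by the definition of a proportional fair solution. I do not anticipate a real obstacle here — the statement collapses to the standard AM--GM estimate — so the only points requiring care are invoking the correct (right-hand) form of $(\ref{eq:def_propfair})$ and checking that every quantity entering the AM--GM step is nonnegative, which holds because each $u_j\colon X\to\Re^+$.
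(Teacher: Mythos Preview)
Your proof is correct and follows essentially the same approach as the paper: define the ratios $r_j=u_j(x)/u_j(x_{PF})$, use the defining inequality \eqref{eq:def_propfair} to bound their arithmetic mean by $1$, and apply AM--GM to conclude $\prod_j r_j\le 1$. The paper's argument is identical up to notation.
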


The following result establishes an upper bound on $\pof$ holding for any multi-agent problem.

\begin{theorem}\label{th:prop_fair_k}
If a proportional fair solution $x_{PF}\in X$ exists, then   $\pof_{PF}\leq \frac {k-1}{k}$.
\end{theorem}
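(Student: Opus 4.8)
The plan is to exploit Theorem~\ref{th:nash}, which tells us that a proportional fair solution $x_{PF}$ maximizes the product of utilities $\prod_{j=1}^k u_j(x)$. Fix the system optimum $\opt$ and let $u_j^* := u_j(\opt)$, so that $\z^* = \sum_{j=1}^k u_j^*$. For the proportional fair solution write $p_j := u_j(x_{PF}) > 0$. The quantity we must bound is
\begin{equation*}
\frac{\z^* - \z(x_{PF})}{\z^*} = 1 - \frac{\sum_j p_j}{\sum_j u_j^*},
\end{equation*}
so it suffices to show that $\sum_j p_j \ge \frac{1}{k}\sum_j u_j^*$, i.e.\ that the proportional fair solution captures at least a $1/k$ fraction of the optimal social welfare.

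The key inequality comes from combining the product-maximization property with the AM--GM inequality. Since $x_{PF}$ maximizes the product, $\prod_j p_j \ge \prod_j u_j^*$. A natural route is to apply AM--GM to a suitably scaled vector: consider the ratios $u_j^*/p_j$. We have $\prod_j (u_j^*/p_j) \le 1$, hence by AM--GM $\frac{1}{k}\sum_j \frac{u_j^*}{p_j} \ge \bigl(\prod_j \frac{u_j^*}{p_j}\bigr)^{1/k}$ — but that goes the wrong way. Instead I would use the defining inequality of proportional fairness directly: applying~(\ref{eq:def_propfair}) with $y = \opt$ gives $\sum_{j=1}^k \frac{u_j^*}{p_j} \le k$. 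Now I want to turn this bound on a sum of ratios into a bound on $\sum_j p_j$ relative to $\sum_j u_j^*$. Here the cleanest argument is: for each $j$, since all utilities are nonnegative and $\opt$ is feasible, one shows $p_j$ cannot be too small relative to $u_j^*$ in aggregate. Concretely, by Cauchy--Schwarz or by a direct convexity argument, $\Bigl(\sum_j u_j^*\Bigr) = \sum_j \sqrt{\frac{u_j^*}{p_j}}\cdot \sqrt{u_j^* p_j} \le \sqrt{\sum_j \frac{u_j^*}{p_j}}\cdot\sqrt{\sum_j u_j^* p_j} \le \sqrt{k}\,\sqrt{\sum_j u_j^* p_j}$, so $\sum_j u_j^* p_j \ge \frac{1}{k}\bigl(\sum_j u_j^*\bigr)^2$. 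This still is not quite $\sum_j p_j$, so I expect the actual argument to be slightly different; the honest version is probably the following normalization trick.

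The cleanest approach: normalize so that $\z^* = \sum_j u_j^* = 1$. Then $u^* = (u_1^*,\dots,u_k^*)$ is a point of the unit simplex, and its own product $\prod_j u_j^*$ is maximized (over the simplex) at the uniform point, but more to the point, for the specific $u^*$ we have $\prod_j u_j^* \le (1/k)^k \cdot$ nothing useful directly. Instead, use that $x_{PF}$ maximizes the product and that scaling $\opt$ by a factor $t \in [0,1]$ — if such scaled solutions were feasible — would give product $t^k \prod_j u_j^*$. Since the utility sets here need not be scalable, I would instead argue as follows: for any fixed $y\in X$, the function $t\mapsto \sum_j \log(t\,u_j(y))$ is irrelevant; rather, I suspect the paper's argument is a direct one using~(\ref{eq:def_propfair}) with $y=\opt$: from $\sum_j u_j^*/p_j \le k$ and the fact that the harmonic-type bound forces $\sum_j p_j$ to be large. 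The cleanest: by the Cauchy--Schwarz inequality, $k \cdot \sum_j p_j \ge \bigl(\sum_j u_j^*/p_j\bigr)\bigl(\sum_j p_j\bigr) \ge \bigl(\sum_j \sqrt{u_j^*}\,\bigr)^2$? No. Let me instead use $\bigl(\sum_j u_j^*\bigr)^2 = \bigl(\sum_j \sqrt{u_j^*/p_j}\cdot\sqrt{u_j^* p_j}\bigr)^2 \le \bigl(\sum_j u_j^*/p_j\bigr)\bigl(\sum_j u_j^* p_j\bigr) \le k \sum_j u_j^* p_j \le k \bigl(\max_j p_j\bigr)\sum_j u_j^*$, giving $\max_j p_j \ge \frac{1}{k}\sum_j u_j^* = \frac{1}{k}\z^*$, hence $\z(x_{PF}) = \sum_j p_j \ge \max_j p_j \ge \frac{1}{k}\z^*$, which is exactly the bound $\pof_{PF} \le \frac{k-1}{k}$.

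The main obstacle, and the step I would be most careful about, is justifying the chain above rigorously: the middle step $\sum_j u_j^* p_j \le (\max_j p_j)\sum_j u_j^*$ is trivial, and Cauchy--Schwarz is standard, so the only real input is~(\ref{eq:def_propfair}) applied at $y=\opt$, which is valid because $\opt \in X$ and $u_j(x_{PF})>0$ for all $j$ by the definition of proportional fairness. One should also note that the bound is achieved only in degenerate cases (all social welfare concentrated on one agent), so no tightness claim is needed here — tightness for FSSP is handled separately by the examples cited in the tables. I would close by remarking that the bound matches, and in the convex case recovers, the corresponding result of Bertsimas et al.~\cite{bib:bft2011}, as promised in Section~\ref{sec:summary}.
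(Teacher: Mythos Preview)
Your final argument via Cauchy--Schwarz is correct: from $\sum_j u_j^*/p_j \le k$ you obtain $(\sum_j u_j^*)^2 \le k\sum_j u_j^* p_j \le k(\max_j p_j)\sum_j u_j^*$, hence $\z(x_{PF}) \ge \max_j p_j \ge \z^*/k$, which is the desired bound. But the route is considerably more elaborate than necessary, and the meandering through Theorem~\ref{th:nash}, AM--GM, and several abandoned attempts obscures a very short argument.

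The paper's proof is essentially a one-liner that you overlooked. Since every term $u_j(\opt)/u_j(x_{PF})$ is nonnegative and their sum is at most $k$ by~(\ref{eq:def_propfair}) applied at $y=\opt$, \emph{each individual term} is itself at most $k$. Hence $u_j(\opt) \le k\,u_j(x_{PF})$ for every $j$, and summing over $j$ gives $\z(\opt) \le k\,\z(x_{PF})$ directly. No Cauchy--Schwarz, no product maximization, no passage through $\max_j p_j$ --- just the trivial observation that a nonnegative summand cannot exceed the total. Your approach reaches the same conclusion $\z(x_{PF}) \ge \z^*/k$, but it actually loses information along the way (you bound only $\max_j p_j$, whereas the paper bounds every $p_j$ individually by $u_j^*/k$) and gains nothing in exchange.
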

\begin{proof}
Let us consider a proportional fair solution $x_{PF}\in X$ 
and let $\opt \in X$ be the system optimum. 
By definition of proportional fair solution, see 
$\sum_{j=1}^k \frac{u_j(\opt )}{u_j(x_{PF})} \leq k$.
 Since $u_j(\cdot)\geq 0$, this implies $\frac{u_j(\opt )}{u_j(x_{PF})} \leq k$ for all $j$. Hence,
 $\z(\opt)= \sum_{j=1}^k u_j(\opt)\leq k \sum_{j=1}^k {u_j(x_{PF})}= k\, \z(x_{PF})$.
Therefore,
 $$\pof_{PF}= \frac{\z(\opt)-\z(x_{PF})}{\z(\opt)}\leq \frac {k-1}{k}$$
which proves the theorem.
\end{proof}
It can be shown by the following example that the result of Theorem \ref{th:prop_fair_k} is tight.

\begin{example}\label{ex:pof_pf_tight}
Consider the natural extension of FSSP with separate items to $k$ agents.
Define an instance where agent $1$ has two items with weight $1$ and $1/k$
while each of the other $k-1$ agents only has an item of weight $\eps>0$.
Clearly, there are only two Pareto efficient solutions:
The system optimum $x_1=\opt$ has $u_1(\opt)=1$ and $u_j(\opt)=0$ for $j=2,\ldots,k$.
The second Pareto efficient solution $x_2$
gives $u_1(x_2)=1/k$ and $u_j(x_2)=\eps$ for $j=2,\ldots,k$.
By plugging in $x_1$ and $x_2$ in (\ref{eq:def_propfair}) it is easy to see that $x_2$ is a proportional fair solution.
Moreover, we have
$$\pof_{PF}\geq \frac {1- (1/k+(k-1)\,\eps)}{1} \to  \frac{k-1}{k}\,.$$
\end{example}

It should be observed  that the bound of the above Theorem~\ref{th:prop_fair_k}
is not implied by \cite{bib:bft2011}, where the bound provided in their Theorem~2 for $\pof_{PF}$
in the case of unequal maximum achievable utilities is:
\begin{equation}\label{eq:BertsimasBound}
\pof_{PF} \leq \frac {k-1}{k} + F- G
\end{equation}
where $F= \frac{\min_j \{{\best_j}\}}{\sum_{j} \best_j}$
and $G= \frac{2\sqrt{k} - 1}{k} \frac{\min_j \{{\best_j} \}}{\max_j \{ {\best_j} \} } $.
Clearly, depending on the $\best_j$ values, $F-G$ can be negative (e.g., Example~\ref{ex:mm_betterthan_pf}) or positive (e.g., Example~\ref{ex:ssppof}).

\subsection{Comparison between fair solution utilities}

Hereafter, we show that in case of two agents ($k=2$), the global value of a proportional fair solution---if it exists---is not smaller than that of a minimax fair solution.

\begin{theorem}\label{th:pfmm}
In the case of $k=2$ agents, if a proportional fair solution $x_{PF}$ exists, then
$\z(x_{PF}) \geq \z(x_{MM})$.
\end{theorem}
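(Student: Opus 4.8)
The plan is to compare the two agents' utilities under a proportional fair solution $x_{PF}$ and a maximin solution $x_{MM}$, and to show that moving from $x_{MM}$ to $x_{PF}$ cannot decrease the social utility. By symmetry of the roles of $A$ and $B$ we may assume $a(x_{MM}) \le b(x_{MM})$, so that $\z(x_{MM}) = u_1(x_{MM}) + u_2(x_{MM})$ with $u_1(x_{MM}) = \min\{u_1(x_{MM}), u_2(x_{MM})\}$ being the maximin value. Write $p_1 = u_1(x_{PF})$, $p_2 = u_2(x_{PF})$, $m_1 = u_1(x_{MM})$, $m_2 = u_2(x_{MM})$. We want $p_1 + p_2 \ge m_1 + m_2$.

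First I would apply the defining inequality of proportional fairness, namely \eqref{eq:def_propfair}, with $y = x_{MM}$: this gives $\tfrac{m_1}{p_1} + \tfrac{m_2}{p_2} \le 2$. Next I would use the maximin optimality of $x_{MM}$: since $x_{PF}$ is a feasible solution, $\min\{p_1, p_2\} \le \min\{m_1, m_2\} = m_1$ (using our normalization $m_1 \le m_2$). The key case distinction is on which of $p_1, p_2$ is the smaller. Consider the case $p_1 \le p_2$; then $p_1 \le m_1 \le m_2$. From $\tfrac{m_1}{p_1} + \tfrac{m_2}{p_2} \le 2$ and $\tfrac{m_1}{p_1} \ge 1$ we get $\tfrac{m_2}{p_2} \le 1$, i.e.\ $p_2 \ge m_2$; combined with $p_1 + p_2 \ge ?$ we still need to dispose of the possibility that $p_1$ is much smaller than $m_1$. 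Here I would exploit that $p_1 \ge \tfrac{m_1}{2 - m_2/p_2} \ge \tfrac{m_1}{2}$ together with $p_2 \ge m_2 \ge m_1$; more carefully, writing $t = m_1/p_1 \ge 1$, we have $p_1 = m_1/t$ and $p_2 \ge m_2/(2-t)$ (valid since $2 - t > 0$ as $t \le 2$; if $t \ge 2$ the constraint forces $m_2/p_2 \le 0$, impossible unless $m_2 = 0$, a degenerate case handled separately). Then
$$
p_1 + p_2 \ \ge\ \frac{m_1}{t} + \frac{m_2}{2-t} \ \ge\ \frac{m_1}{t} + \frac{m_1}{2-t},
$$
and the function $g(t) = \tfrac1t + \tfrac1{2-t}$ on $(0,2)$ attains its minimum $2$ at $t=1$, so $p_1 + p_2 \ge m_1 \cdot g(t) \ge 2 m_1$. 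This is not quite $m_1 + m_2$ yet, so I would instead keep $p_2 \ge m_2/(2-t)$ and bound $p_1 + p_2 \ge \tfrac{m_1}{t} + \tfrac{m_2}{2-t}$; the right-hand side as a function of $t \in [1, 2)$ is minimized at $t = 1$ (its derivative $-m_1/t^2 + m_2/(2-t)^2$ is nonnegative on $[1,2)$ because $m_2 \ge m_1$ and $2 - t \le 1 \le t$), giving $p_1 + p_2 \ge m_1 + m_2 = \z(x_{MM})$, as desired. The symmetric case $p_2 \le p_1$ is handled identically after swapping indices, noting that then $p_2 \le m_1$ and the same monotonicity argument applies.

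The main obstacle I anticipate is the case analysis around which coordinate of $x_{PF}$ is the minimum one and the degenerate situation where some utility equals zero: a proportional fair solution by definition has $u_j(x_{PF}) > 0$ for all $j$, so $p_1, p_2 > 0$ is safe, but $x_{MM}$ could in principle have $m_1 = 0$ (e.g.\ if no solution gives both agents positive utility), in which case $\z(x_{MM}) = m_2 \le \best_2$ and one argues directly that $p_1 + p_2 \ge p_2 \ge m_2$ using $\tfrac{m_2}{p_2} \le 2 - \tfrac{0}{p_1}$... — actually $m_1 = 0$ makes the constraint read $\tfrac{m_2}{p_2} \le 2$, which does not immediately give $p_2 \ge m_2$, so this edge case genuinely needs the maximin property: if $m_1 = 0$ then $x_{MM}$ is simply a single-agent optimum for agent $2$, $m_2 = \best_2$, and one must check that the proportional fair solution still achieves $p_1 + p_2 \ge \best_2$ — I would verify this by noting that $x_{PF}$ Pareto-dominates or ties some solution and using additivity of the capacity constraint, which in the FSSP setting forces $\best_2 \le 1$ while any Pareto-efficient solution with $p_1 > 0$ has large social value; alternatively this degenerate subcase may be excluded by the standing convention. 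The clean monotonicity computation for $g$ is routine and I would not belabor it.
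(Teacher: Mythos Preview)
Your argument uses the same two ingredients as the paper---the proportional-fairness inequality $\frac{m_1}{p_1}+\frac{m_2}{p_2}\le 2$ applied with $y=x_{MM}$, and the maximin inequality $\min\{p_1,p_2\}\le\min\{m_1,m_2\}$---but the way you organise the case split creates a real gap. You first fix $m_1\le m_2$. Your Case~1 ($p_1\le p_2$) is handled correctly via the monotonicity of $h(t)=\frac{m_1}{t}+\frac{m_2}{2-t}$ on $[1,2)$, which relies on $m_2\ge m_1$. But Case~2 ($p_2\le p_1$) is \emph{not} symmetric and cannot be dispatched by ``swapping indices'': after swapping, the $m$-ordering reverses too, so you would need the monotonicity argument with $m_1\ge m_2$, which fails. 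Concretely, in Case~2 the relevant lower bound is $p_1+p_2\ge g(s):=\frac{m_1}{2-s}+\frac{m_2}{s}$ with $s=m_2/p_2$, and for $m_1<m_2$ the function $g$ is \emph{decreasing} near $s=1$; e.g.\ with $m_1=1$, $m_2=1.8$ one has $g(1.2)=2.75<2.8=m_1+m_2$. Case~2 can be rescued---use the sharper constraint $s\ge m_2/m_1$ (since $p_2\le m_1$), check that $g$ is increasing on $[m_2/m_1,2)$, and verify $g(m_2/m_1)=\frac{m_1^2}{2m_1-m_2}+m_1\ge m_1+m_2$, which reduces to $(m_1-m_2)^2\ge 0$---but this is additional work your write-up does not supply.

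The paper sidesteps the whole issue by normalising on $x_{PF}$ rather than on $x_{MM}$: assume $p_1\ge p_2$. Then maximin gives $m_2\ge p_2$; Pareto efficiency of $x_{PF}$ forces $m_1<p_1$ (otherwise $x_{MM}$ would dominate $x_{PF}$). Writing $m_1=p_1-\delta$, $m_2=p_2+\varepsilon$ with $\delta,\varepsilon>0$, the PF inequality becomes $\varepsilon\,p_1\le\delta\,p_2$, and $p_1\ge p_2$ yields $\varepsilon\le\delta$, hence $m_1+m_2\le p_1+p_2$. One normalisation, no case split, no calculus. Your worry about the degenerate case $m_1=0$ is also misplaced and should not invoke FSSP specifics: if the maximin value is $0$ then every feasible $x$ has $\min_j u_j(x)=0$, so no solution with all $u_j>0$ exists, and the hypothesis that $x_{PF}$ exists is vacuously false.
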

\begin{proof}
Assume by contradiction that there exists an instance
with $u_1(x_{PF})+ u_2(x_{PF}) < u_1(x_{MM})+ u_2(x_{MM})$.
Without loss of generality  we assume $u_1(x_{PF}) \geq  u_2(x_{PF})$.
By the definition of maximin fair solution, we know that $u_2(x_{MM})> u_2(x_{PF})$.
(If $u_2(x_{MM}) = u_2(x_{PF})$ then $x_{PF}$ would be Pareto dominated by $x_{MM}$).
From Pareto efficiency of $x_{PF}$ 
it follows that $u_1(x_{MM}) < u_1(x_{PF})$.
Let $u_1(x_{MM}) := u_1(x_{PF}) - \delta$ and $u_2(x_{MM}) := u_2(x_{PF})+\eps$ for
some values $\delta, \eps >0$.

From the definition of proportional fairness we have:
$$\frac{u_1(x_{PF}) - \delta}{u_1(x_{PF})} + \frac{u_2(x_{PF})+\eps}{u_2(x_{PF})} \leq 2
\Longleftrightarrow
\frac{\eps}{u_2(x_{PF})} \leq \frac{\delta}{u_1(x_{PF})}
\Longleftrightarrow
\eps \cdot u_1(x_{PF}) \leq \delta \cdot u_2(x_{PF})
$$
Since $u_1(x_{PF}) \geq  u_2(x_{PF})$ this implies $\eps \leq \delta$.
But then we have
$u_1(x_{MM})+ u_2(x_{MM}) = u_1(x_{PF}) - \delta + u_2(x_{PF})+\eps
\leq u_1(x_{PF}) + u_2(x_{PF})$ in contradiction to the above assumption.
\end{proof}
Theorem~\ref{th:pfmm} immediately yields the following statement for the Price of Fairness.
\begin{corollary}\label{th:pofPFlessthanpofMM}
In the case of $k=2$ agents, if a proportional fair solution $x_{PF}$ exists, then $\pof_{PF}\leq \pof_{MM}$.
\end{corollary}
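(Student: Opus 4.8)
The plan is to derive Corollary~\ref{th:pofPFlessthanpofMM} directly from Theorem~\ref{th:pfmm}. The key observation is that both fairness concepts are compared against the \emph{same} system optimum on the \emph{same} instance, so a pointwise inequality between the global utilities $\z(x_{PF})$ and $\z(x_{MM})$ translates into an inequality on the respective relative losses, instance by instance, and hence on the suprema defining $\pof_{PF}$ and $\pof_{MM}$.

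First I would fix an arbitrary instance $I$ on which a proportional fair solution $x_{PF}$ exists (these are the only instances over which the supremum defining $\pof_{PF}$ is taken). On such an instance, a maximin fair solution $x_{MM}$ also exists, since for the problems treated here maximin solutions always exist. By Theorem~\ref{th:pfmm} we have $\z_I(x_{PF}) \geq \z_I(x_{MM})$. Since $\z^*_I > 0$, dividing the chain
$$
\z^*_I - \z_I(x_{PF}) \;\leq\; \z^*_I - \z_I(x_{MM})
$$
by $\z^*_I$ gives
$$
\frac{\z^*_I - \z_I(x_{PF})}{\z^*_I} \;\leq\; \frac{\z^*_I - \z_I(x_{MM})}{\z^*_I}.
$$
Taking the supremum over all instances $I$ for which $x_{PF}$ exists, the left-hand side is exactly $\pof_{PF}$; the right-hand side is a supremum over a subset of all instances and is therefore at most $\sup_I \frac{\z^*_I - \z_I(x_{MM})}{\z^*_I} = \pof_{MM}$. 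Hence $\pof_{PF} \leq \pof_{MM}$.

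There is no real obstacle here: the only point requiring a word of care is the domain of the supremum. The supremum defining $\pof_{PF}$ ranges only over instances where a proportional fair solution exists, and on every such instance the maximin loss is also well defined and dominates the proportional-fair loss; enlarging the index set from ``instances admitting $x_{PF}$'' to ``all instances'' can only increase the supremum on the maximin side, which is precisely what we need. I would state this explicitly so the reader sees that the restricted domain of $\pof_{PF}$ works in our favour rather than against us. Everything else is the monotonicity of $t \mapsto (\z^* - t)/\z^*$ and the monotonicity of the supremum, so the argument is a two-line consequence of Theorem~\ref{th:pfmm}.
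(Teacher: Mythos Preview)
Your proposal is correct and follows exactly the route the paper takes: the paper simply states that the corollary ``immediately'' follows from Theorem~\ref{th:pfmm}, and your argument is the natural unpacking of that one-line claim. Your explicit handling of the supremum domains (instances admitting $x_{PF}$ versus all instances) is a welcome clarification of a point the paper leaves implicit.
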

As soon as the number of agents increases, already for $k=3$, this property does not hold anymore, in general. This is shown by the following example.

\begin{example}\label{ex:mm_betterthan_pf}

Consider an instance of an extension of FSSP to three agents and separate items. Let $A$, $B$, and $C$ be the three agents each owning two items denoted as $a_1$, $a_2$, $ b_1 $, $ b_2$,  $ c_1   $, and $ c_2 $. Their weights are reported in the following table.
\quad{}\\
$$
\begin{array}{lcccccccc}
\hline
\mbox{item}  & a_1  & a_2  & & b_1  & b_2  & & c_1   & c_2 \tabularnewline
\hline
\mbox{weight }  & \frac 1 5  + 2\eps & \frac 1 5  + \eps  & & \frac 1 2  + 5\eps  & \frac 1 2  + \eps & & \frac 1 4  + 7\eps & \frac 1 4  + 11\eps   \tabularnewline
\hline
\end{array}
$$
It is easy to see that, for some small $\eps$ values, e.g. $\eps=0.003$,   the solution consisting  of items $a_2$, $b_2$ and $c_2$ is a proportional fair solution $x_{PF}$ and has global value $\z(x_{PF})=0.95 + 13\eps $, 
 while the solution with items  $a_1$, $b_1$ and $c_1$ is a maximin fair solution $x_{MM}$ and has global value $\z(x_{MM})=0.95 + 14\eps$.
\end{example}
The dominance relation of Theorem \ref{th:pfmm} does not extend to Kalai-Smorodinski solutions.
In particular, we  show through two examples that Kalai-Smorodinski solutions can have a social value
larger or smaller than those of the other two types of fair solutions.
The setting of the examples follows the FSSP described in Section~\ref{sec:notation} in the case in which there are only two agents and separate item sets.
Example \ref{ex:ks_not_pf} provides an instance where  $\z(x_{PF}) = \z(x_{MM})> \z(x_{KS})$, while in Example \ref{ex:ks_betterthan_pf} an instance with $\z(x_{PF}) = \z(x_{MM}) < \z(x_{KS})$ is reported.

\begin{example}\label{ex:ks_not_pf}

Consider an instance of 
FSSP with separate items and weights as in the following table, where $0<\eps'<\eps$ are small values.
\quad{}\\
$$
\begin{array}{lcccccccc}
\hline
\mbox{item}  & a_1  & a_2 & a_3  & a_4  & b_1  & b_2 & b_3  & b_4  \tabularnewline
\hline
\mbox{weight }  & 1  & \frac 1 4 +\eps'  & \frac 1 4 & \frac 1 4 & 1- 3\eps & \frac 1 4   & \frac 1 4 & \frac 1 4 - \eps \tabularnewline
\hline
\end{array}
$$
In this case, it is possible to enumerate all six Pareto efficient solutions,
whose utilities are reported below:
$$
\begin{array}{lcccccc}
\hline
\mbox{Solution } x   & x_1  & x_2 & x_3  & x_4  & x_5  & x_6  \tabularnewline
\hline
{\A(x)}  & 1  & \frac 3 4 +\eps'  & \frac 1 2 +\eps'  & \frac 1 2 & \frac 1 4 +\eps' & 0 \tabularnewline
{\B(x)}  & 0  & \frac 1 4 -\eps  & \frac 1 2 -\eps  & \frac 1 2 & \frac 3 4 -\eps & 1-3\eps \tabularnewline
\hline
\end{array}
$$
It is easy to see that $\bestA=1$, $\bestB=1-3\eps$, and, with some simple algebra, to verify that $x_{PF}=x_{MM}=x_4$, while $x_{KS}=x_3$. Hence, in this example $\z(x_{PF}) = \z(x_{MM})> \z(x_{KS})$.
\end{example}


\begin{example}\label{ex:ks_betterthan_pf}
Consider an instance of 
FSSP with separate items and weights as in the following table
with $\eps>0$.
$$
\begin{array}{lcccccc}
\hline
\mbox{item}  & a_1  & a_2 & a_3   & b_1  & b_2 & b_3    \tabularnewline
\hline
\mbox{weight }  & 1  & \frac 3 4   & \frac 1 2 +\eps & \frac 1 4 -\eps  & \frac 1 4 - 2\eps & 0 \tabularnewline
\hline
\end{array}
$$
In this case also, it is possible to enumerate all three Pareto efficient solutions,
whose utilities are reported below:
$$
\begin{array}{lccc}
\hline
\mbox{Solution } x   & x_1  & x_2 & x_3   \tabularnewline
\hline
{\A(x)}  & 1  & \frac 3 4   & \frac 1 2 +\eps    \tabularnewline
{\B(x)}  & 0  & \frac 1 4 -\eps  & \frac 1 2 -3\eps  \tabularnewline
\hline
\end{array}
$$
Clearly, $\bestA=1$ and $\bestB=\frac 1 2 - 3 \eps$, while for $0<\eps < \frac 1 {10}$ we have that  $x_{PF}=x_{MM}=x_3$ and  $x_{KS}=x_2$. So, in this example $\z(x_{PF}) = \z(x_{MM})< \z(x_{KS})$.
\end{example}

\subsection{Symmetric multi-agent problem}\label{sec:symmetry}

Consider now a general symmetric multi-agent problem  $\p_{sym}$. Recall that, in this case, all agents are
``interchangeable'' in the sense that for any solution $x\in X$
and for any permutation $\pi$ of the $k$ agents there
always exists a solution $y\in X$ such that,
$u_j(x)=u_{\pi(j)}(y)$ for all $j=1, \ldots, k$.
This concept of symmetry applies for a large number of allocation problems
and has been often studied in the literature
(see, e.g.~\cite{bib:gkw2010}).
Also, in game theory, a symmetric game is a game where the payoffs for playing a particular strategy depend only on the other strategies employed, not on who is playing them.

The following simple result presents a necessary condition for the existence of a proportional fair solution in the symmetric case.

\begin{theorem}\label{th:shared_pf_sameamount}
If a proportional fair solution $x_{PF}$ of problem $\p_{sym}$
exists, then all the agents
have the same utility values, i.e.~$u_j(x_{PF})=\frac 1 k \z(x_{PF})$ for all $j=1,\ldots, k$.
\end{theorem}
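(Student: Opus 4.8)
The plan is to combine the defining symmetry of $\p_{sym}$ with the uniqueness of proportional fair solutions established in Theorem~\ref{thm:PFequal}. Fix a proportional fair solution $x_{PF}$ (so $u_j(x_{PF})>0$ for all $j$) and an arbitrary permutation $\pi$ of the $k$ agents. By the symmetry property applied to $x_{PF}$ and $\pi$, there is a solution $y\in X$ with $u_j(x_{PF})=u_{\pi(j)}(y)$ for all $j$; substituting $j\mapsto\pi^{-1}(j)$ this reads $u_j(y)=u_{\pi^{-1}(j)}(x_{PF})$, so the utility vector of $y$ is merely a reshuffling of that of $x_{PF}$, and in particular $u_j(y)>0$ for all $j$.

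The crucial step is to verify that $y$ is itself proportional fair. Let $z\in X$ be arbitrary. Applying the symmetry property to $z$ and the permutation $\pi^{-1}$ yields a solution $z''\in X$ with $u_j(z)=u_{\pi^{-1}(j)}(z'')$ for all $j$, i.e.\ $u_{\pi(i)}(z)=u_i(z'')$ for all $i$. Then, changing the summation index via $j=\pi(i)$,
\[
\sum_{j=1}^k\frac{u_j(z)}{u_j(y)}=\sum_{j=1}^k\frac{u_j(z)}{u_{\pi^{-1}(j)}(x_{PF})}=\sum_{i=1}^k\frac{u_{\pi(i)}(z)}{u_i(x_{PF})}=\sum_{i=1}^k\frac{u_i(z'')}{u_i(x_{PF})}\leq k,
\]
where the last inequality holds because $x_{PF}$ is proportional fair and $z''\in X$. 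Since $z$ was arbitrary, $y$ satisfies~(\ref{eq:def_propfair}) and is therefore proportional fair.

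Now invoke Theorem~\ref{thm:PFequal}: any two proportional fair solutions have the same utility vector, so $u_j(y)=u_j(x_{PF})$ for all $j$. Together with $u_j(y)=u_{\pi^{-1}(j)}(x_{PF})$ this gives $u_{\pi^{-1}(j)}(x_{PF})=u_j(x_{PF})$ for every $j$; letting $\pi$ range over all permutations forces $u_1(x_{PF})=\cdots=u_k(x_{PF})$, whence each equals $\frac1k\sum_{i=1}^k u_i(x_{PF})=\frac1k\z(x_{PF})$, which is the claim.

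The only real obstacle is the index bookkeeping with $\pi$ and $\pi^{-1}$ needed to show that the permuted solution $y$ inherits proportional fairness; once that permutation-invariance is established, Theorem~\ref{thm:PFequal} does the rest immediately. (One could instead start from Theorem~\ref{th:nash}, noting that $x_{PF}$ maximizes $\prod_j u_j$ and that the permuted solution shares this product, but turning that into equality of the utilities still essentially requires the uniqueness statement, so the route through Theorem~\ref{thm:PFequal} is the most economical.)
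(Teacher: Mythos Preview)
Your proof is correct, but it takes a different route from the paper's. The paper argues directly and locally: assuming two agents, say $1$ and $2$, have distinct utilities, it uses symmetry only once (with the transposition swapping them) to produce $y$, and then plugs $y$ into the proportional-fairness inequality for $x_{PF}$ to obtain $\frac{u_2(x_{PF})}{u_1(x_{PF})}+\frac{u_1(x_{PF})}{u_2(x_{PF})}\leq 2$, which is impossible for unequal positive numbers since $r+1/r>2$ whenever $r\neq 1$. No appeal to Theorem~\ref{thm:PFequal} is needed.

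Your approach is more structural: you show that in a symmetric problem the set of proportional fair solutions is closed under permutation of the utility vector, and then invoke uniqueness (Theorem~\ref{thm:PFequal}) to force all permuted vectors to coincide. This is perfectly sound, and in fact Theorem~\ref{thm:PFequal} is proved in the Appendix by exactly the same $\phi+1/\phi\geq 2$ trick that the paper uses here directly; so you are essentially routing the same inequality through an intermediate lemma. The paper's version is shorter and self-contained (one application of symmetry, one inequality), while yours makes explicit the pleasant general fact that proportional fairness is permutation-invariant in symmetric problems, at the cost of a second use of symmetry and some index bookkeeping.
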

\begin{proof}
Let $x_{PF}$ be a proportional fair solution 
and assume by contradiction that there is (at least) one pair of agents, say $1$ and $2$, having different utilities, i.e., $u_1(x_{PF}) \neq u_2(x_{PF})$. By definition of $\p_{sym}$, there exists a feasible ``permuted'' solution $y$ with $u_1(y)=u_2(x_{PF})$, $u_2(y)=u_1(x_{PF})$, and unchanged utilities $u_j(y)=u_j(x_{PF})$ for all the other agents $j = 3,\ldots,k$.

Since $x_{PF}$ is a proportional fair solution and $y$ is a feasible solution, from  \eqref{eq:def_propfair} we have that:
$$
\frac{u_1(y)}{u_1(x_{PF})} + \frac{u_2(y)}{u_2(x_{PF})} + \sum_{j=3}^k \frac{u_j(y)}{u_j(x_{PF})}  \leq k
$$
which yields $\frac{u_2(x_{PF})}{u_1(x_{PF})} + \frac{u_1(x_{PF})}{u_2(x_{PF})} \leq 2$.
But this is a contradiction since, for any positive $r \neq 1$, $r + \frac 1 r > 2$.
Thus, in a proportional fair solution, no pair of agents can have different utility values and the thesis follows.
\end{proof}
Note that the  condition in Theorem \ref{th:shared_pf_sameamount} is necessary but not sufficient for a solution to be proportional fair,
see for instance Example~\ref{ex:ssppoflargeshared}.
However, it follows immediately that if a proportional fair solutions of problem $\p_{sym}$ exists,
then it must also be optimal.

\begin{corollary}\label{thm:pfsystemoptimum}
If a proportional fair solution $x_{PF}$ of problem $\p_{sym}$
exists, then it is system optimal, i.e.\ $\z(x_{PF})=\z(\opt)$
and $\pof_{PF}(\alpha)=0$, for any $\alpha \in (0,1]$.
\end{corollary}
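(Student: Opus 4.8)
The plan is to derive the statement directly from Theorem~\ref{th:shared_pf_sameamount} together with the defining inequality~\eqref{eq:def_propfair} of proportional fairness, applied to the system optimum in the role of the competing feasible solution. No convexity or averaging argument is needed, so the proof is short.

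First I would record what Theorem~\ref{th:shared_pf_sameamount} gives us: the assumed proportional fair solution $x_{PF}$ of $\p_{sym}$ satisfies $u_j(x_{PF}) = \tfrac1k\,\z(x_{PF})$ for every agent $j$, and all these common values are strictly positive (strict positivity of the utilities is part of the definition of a proportional fair solution, which is also what makes~\eqref{eq:def_propfair} legitimate to invoke). Then I would plug the system optimum $\opt$ into~\eqref{eq:def_propfair} as the feasible solution, obtaining $\sum_{j=1}^k \frac{u_j(\opt)}{u_j(x_{PF})} \le k$. Substituting the common value $u_j(x_{PF}) = \tfrac1k\,\z(x_{PF})$ rewrites the left-hand side as $\frac{k}{\z(x_{PF})}\sum_{j=1}^k u_j(\opt) = \frac{k\,\z(\opt)}{\z(x_{PF})}$, so the inequality collapses to $\z(\opt)\le \z(x_{PF})$. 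Since $\opt$ is a system optimum we also have $\z(\opt)\ge \z(x_{PF})$, hence $\z(x_{PF})=\z(\opt)$, i.e.\ $x_{PF}$ is system optimal.

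For the statement about $\pof_{PF}(\alpha)$, I would note that the argument above holds verbatim for \emph{every} instance $I$ on which a proportional fair solution exists; in particular, restricting to FSSP instances with shared items and all item weights bounded by $\alpha$ still yields a symmetric problem, so nothing in the reasoning changes. Thus $\z^*_I-\z_I(x_{PF})=0$ for each such instance, and by~\eqref{eq:defpof} this gives $\pof_{PF}(\alpha)=0$ for any $\alpha\in(0,1]$.

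There is essentially no technical obstacle here — the only points to keep in mind are that the conclusion is vacuously about those instances where $x_{PF}$ actually exists (as stated), and that the appeal to~\eqref{eq:def_propfair} is justified precisely by that existence assumption. An alternative route would combine Theorem~\ref{th:nash} (product maximality of $x_{PF}$) with the equal-share conclusion of Theorem~\ref{th:shared_pf_sameamount}, but that still needs an extra comparison of products against the (permuted) optimum and is strictly longer, so I would present the direct argument above.
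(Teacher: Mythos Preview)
Your proposal is correct and follows essentially the same approach as the paper: both invoke Theorem~\ref{th:shared_pf_sameamount} to get equal utilities $u_j(x_{PF})=\tfrac1k\,\z(x_{PF})$, then substitute this into~\eqref{eq:def_propfair} with $y=\opt$ to obtain $\z(\opt)\le\z(x_{PF})$ and hence equality. You are slightly more explicit than the paper in spelling out the reverse inequality and the conclusion $\pof_{PF}(\alpha)=0$, but the argument is the same.
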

\begin{proof}
From Theorem \ref{th:shared_pf_sameamount} we know that if
a proportional fair solution exists, then $u_j(x_{PF})=\frac 1 k \z(x_{PF})$.
Plugging in this identity into the definition of proportional fairness \eqref{eq:def_propfair}
we get:
$$ k \geq \sum_{j=1}^k\frac{u_j(\opt) }{u_j(x_{PF})}
=\sum_{j=1}^k \frac{u_j(\opt) }{\frac 1 k \z(x_{PF})}
\ \Longleftrightarrow\
\z(x_{PF}) = \sum_{j=1}^k u_j(\opt) = \z(\opt)
$$
which proves the thesis.
\end{proof}
So far, we presented some general results holding for any general multi-agent problem.
In the next section we address a specific allocation problem with $k=2$ agents.

\section{Price of Fairness for the fair subset sum problem with two agents}\label{sec:subsetsum}

In this section we focus on the Fair Subset Sum Problem (FSSP) for two agents and we  provide several bounds on the Price of Fairness.
As we discussed in Section~\ref{sec:def_ss}, to give a more comprehensive analysis, we introduce an upper bound
$\alpha\leq 1$ on the largest item weight,
i.e.\ $a_i, b_i, w_i \leq \alpha$ for all items $i$
and analyze $\pof$ as a function of $\alpha$.
Formally, we extend the definition of $\pof$ from \eqref{eq:defpof}
by taking the upper bound $\alpha$ into account:
Let $\mathcal{I}_{\alpha}$ denote the set of all instances of our FSSP
where all items weights are not larger than $\alpha$.
Given $I\in \mathcal{I}_{\alpha}$ let $\z_I(x)=a(x)+b(x)$ for a solution $x$
and $\z^*_I$ be the system optimum value for instance $I$.
Then we can define the Price of Fairness depending on $\alpha$ as follows:
\begin{equation}\label{eq:def_pof_alpha}
 \pof(\alpha)  = \sup_{I\in \mathcal{I}_{\alpha}}  \frac{\z^*_I-\z_I(x)}{\z^*_I}
\end{equation}
Obviously, $\pof=\pof(1)$.
It is also clear from the above definition that $\pof(\alpha)$ is monotonically
increasing in $\alpha$, i.e.\ if $\alpha>\alpha'$, then $\pof(\alpha) \geq \pof(\alpha')$.
Moreover, note that the value $\pof(\alpha)$  may be actually attained for an instance
$I\in \mathcal{I}_{\alpha'}$ with $\alpha' < \alpha$. Figure~\ref{fig:poftrends} illustrates the functions $\pof_{MM}(\alpha)$ and $\pof_{KS}(\alpha)$ for the separate items sets and shared items set cases.

\smallskip
The first bound on the Price of Fairness for FSSP
with $k$ agents and an upper bound $\alpha$
on the maximum item weight, is given in the following lemma. We show in the next sections that for certain $\alpha$ values this bound can be improved.

\begin{lemma}\label{lem:lessthanalpha}
The Price of Fairness for any Pareto efficient solution of the FSSP
with $k$ agents and an upper bound $\alpha \in (0,1]$
on the maximum item weight is not larger than $\alpha$,
i.e.\ $\pof(\alpha) \leq \alpha$.
\end{lemma}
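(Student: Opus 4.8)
The plan is to exploit the fact that any Pareto efficient solution $x$ of FSSP must leave little "free space" in the capacity bound — at most $\alpha$ worth of weight, since otherwise some unselected item could be added. First I would fix an instance $I \in \mathcal{I}_\alpha$ and a Pareto efficient solution $x$. The key observation is that $\z_I(x) = a(x) + b(x)$ cannot be too far below $1$: if $\z_I(x) < 1 - \alpha'$ for some item of weight $\alpha' \le \alpha$ that belongs to one of the agents but is not selected, then adding that item yields a feasible solution (the capacity bound $1$ is still respected) that strictly dominates $x$, contradicting Pareto efficiency. More carefully, if every item of every agent is already selected, then $x$ is trivially the system optimum and $\pof = 0$; otherwise there is at least one unselected item of weight at most $\alpha$, and Pareto efficiency forces $\z_I(x) > 1 - \alpha$ (strictly, or at least $\z_I(x) \ge 1-\alpha$ if we allow weak domination — a weakly dominating feasible solution with strictly larger global value would still contradict that $x$ attains the Pareto frontier in the relevant sense; I would state this cleanly as $\z_I(x) \ge 1 - \alpha$).

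Next I would bound the system optimum trivially by the capacity: $\z^*_I \le 1$. Combining the two bounds,
\[
\frac{\z^*_I - \z_I(x)}{\z^*_I} \le \frac{1 - (1-\alpha)}{\z^*_I} = \frac{\alpha}{\z^*_I}.
\]
Here I need $\z^*_I \ge 1$ to conclude the ratio is at most $\alpha$, which is not generally available — so instead I would argue directly: $\z^*_I - \z_I(x) \le 1 - (1-\alpha) = \alpha$ only bounds the numerator, and since $\z^*_I \ge \z_I(x) \ge 1 - \alpha$ as well, but that denominator can be small. The cleaner route is: $\z^*_I \le 1$ and $\z_I(x) \ge \z^*_I - \alpha$, since the same Pareto/greedy argument applied relative to $\opt$ shows $x$ cannot be more than one item-weight worse than optimal — actually the crispest claim is $\z^*_I - \z_I(x) \le \alpha$ combined with $\z^*_I \ge \z^*_I$, giving $\pof(\alpha) = \sup_I (\z^*_I - \z_I(x))/\z^*_I \le \sup_I \alpha/\z^*_I$, and since in any non-trivial instance $\z^*_I \ge \alpha$ (there is at least one item, of weight up to $\alpha$, but also $\z^*_I$ could be tiny if all items are tiny — but then $\alpha$ is tiny too and $\z^*_I$ is close to $1$ or at least the ratio stays controlled).

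The main obstacle, and the point requiring care, is exactly this denominator issue: showing $(\z^*_I - \z_I(x))/\z^*_I \le \alpha$ rather than just $\z^*_I - \z_I(x) \le \alpha$. I expect the resolution is that one proves $\z_I(x) \ge (1-\alpha)\,\z^*_I$ directly: any Pareto efficient $x$ that is not already optimal must have used up the capacity to within $\alpha$, and since $\z^*_I \le 1$, we get $\z_I(x) \ge \z^*_I - \alpha \ge \z^*_I - \alpha \z^*_I = (1-\alpha)\z^*_I$ — wait, $\z^*_I \le 1$ gives $\alpha \ge \alpha \z^*_I$, so $\z^*_I - \alpha \le \z^*_I - \alpha\z^*_I$ is the wrong direction. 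The correct handling: $\z^*_I - \z_I(x) \le \alpha$ and we want this over $\z^*_I$; note $\z^*_I \ge \z_I(x)$ and if $\z_I(x) \ge 1-\alpha$ then either $\z^*_I \ge 1-\alpha$ too, and then the ratio is $\le \alpha/(1-\alpha)$ which is not good enough for $\alpha$ near $1$ — so the slick argument must be subtler. I would therefore structure the proof around the dichotomy: if the free capacity $1 - \z_I(x)$ is at least as large as some unused item, contradiction with Pareto efficiency; hence $1 - \z_I(x) < \alpha$, i.e. $\z_I(x) > 1 - \alpha$; and separately $\z^*_I \le 1$, and crucially I would show $\z^*_I \ge \z_I(x) > 1-\alpha$ is not enough, so the real claim must be that the worst case ratio $\alpha$ is achieved in the regime where $\z^*_I$ is exactly $1$ (full capacity), making the bound $(1 - (1-\alpha))/1 = \alpha$ exact; for instances with smaller $\z^*_I$ the ratio is no larger because the loss $\z^*_I - \z_I(x)$ shrinks proportionally. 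I expect the authors' proof makes this precise by a short case analysis on whether $\opt$ itself leaves room for an additional item, and I would follow that line, treating the trivial case ($x = \opt$, $\pof = 0$) separately and otherwise bounding the numerator by $\alpha$ while observing $\z^*_I = 1$ in any instance that approaches the supremum.
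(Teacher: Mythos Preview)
Your proof plan correctly identifies the core observation --- Pareto efficiency forces $\z_I(x) > 1-\alpha$ whenever $x$ is not already optimal --- and this is exactly what the paper uses. But you then get genuinely stuck on converting this into the ratio bound, and the proposal never resolves it. You try bounding numerator and denominator separately (numerator $\le \alpha$, denominator $\ge 1-\alpha$), obtain $\alpha/(1-\alpha)$, correctly note this is too weak, and then trail off into speculation about case analysis and ``$\z^*_I = 1$ in any instance that approaches the supremum.''

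The missing step is a one-line monotonicity observation. From $\z_I(x) > 1-\alpha$ you get
\[
\frac{\z^*_I - \z_I(x)}{\z^*_I} \;\le\; \frac{\z^*_I - (1-\alpha)}{\z^*_I} \;=\; 1 - \frac{1-\alpha}{\z^*_I}\,.
\]
The right-hand side is increasing in $\z^*_I$ (for $\alpha<1$), and since $\z^*_I \le 1$ it is maximized at $\z^*_I = 1$, giving exactly $\alpha$. The point is that you must keep $\z^*_I$ in \emph{both} numerator and denominator rather than bounding them independently; your separate bounds throw away precisely the coupling that makes the inequality tight. This is the entire content of the paper's proof, which is two lines long --- no case analysis on $\opt$ is needed.
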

\begin{proof}
We can observe that if a Pareto efficient solution $x \neq \opt$ (where $\opt$ is a system optimum) is such that
$\z(x)\leq 1 - \alpha$ then any item not included in $x$ (with weight at most $\alpha$) could be added to $x$ which thus cannot be a Pareto efficient solution.
Hence, it must be $\z(x) > 1 - \alpha$ and thus, recalling that $\z^*\leq 1$, $\pof(\alpha) \leq \frac{\z^*-(1-\alpha)}{\z^*} \leq \frac{1-(1-\alpha)}{1} \leq \alpha$, and the thesis follows.
\end{proof}
Hereafter, we discuss in detail the two scenarios introduced in Section \ref{sec:def_ss}:
The separate items case is analyzed in Section \ref{sec:separate}, while the shared items one is addressed in Section \ref{sec:shared}.

\subsection{Separate item sets}
\label{sec:separate}

Here we assume that each agent owns a separate set of items
denoted by $a_1, a_2, \ldots$ for agent $A$ and $b_1, b_2, \ldots$ for agent $B$.
To avoid trivial cases we also assume that $a_i\leq 1$ and $b_i \leq 1$
and $\sum_i a_i + \sum_i b_i >1$,
i.e.\ in every feasible solution at least one item has to remain unselected.
We start with a very simple example 
showing that in general the Price of Fairness can reach~$1$.

\begin{example}\label{ex:ssppof}
Consider an instance of the two agent FSSP with $n=2$ and items weights reported in the following table.
$$
\begin{array}{lcccc}
\hline
\mbox{item}  & a_1  & a_2  & b_1  & b_2 \tabularnewline
\hline
\mbox{weight}  & 1  & \eps  & \eps & \eps\tabularnewline
\hline
\end{array}
$$
It is easy to see that there are only two nondominated solutions, $x_1$ and $x_2$, with $\A(x_1)=1$, $\B(x_1)=0$ and $\A(x_2)=\eps$, $\B(x_2)=2\eps$. Clearly, $x_1=\opt$ is the global optimum and $\z^*=\bestA = \A^*=1$, while $x_2$ is a maximin fair solution and also a Kalai-Smorodinski solution, i.e. $x_2= x_{MM}=x_{KS}$, where  $\B(x_2)=\bestB=2\eps$.
 So, a worst possible lower bound is given by
$$\pof_{MM}=\pof_{KS} \geq  \frac{1-3\eps}{1} \to 1.$$
\end{example}
Note that in the above example, for small $\eps$ values there exist no proportionally fair solutions.

Hereafter, we introduce an upper bound $\alpha<1$ on the maximum item weight.
At first 
we give two examples providing lower bounds on $\pof(\alpha)$.
\begin{example}\label{ex:ssppoflarge}
Consider the case $\alpha \in [2/3, 1)$ and let the items weights of an instance of FSSP with two agents be reported in the following table.
$$
\begin{array}{lcccc}
\hline
\mbox{item}  & a_1  & a_2  & b_1  & b_2 \tabularnewline
\hline
\mbox{weight}  & \alpha  & 2\eps  & 1-\alpha+\eps & \eps^2\tabularnewline
\hline
\end{array}
$$
There are two nondominated solutions, namely $x_1$  with $\A(x_1)=\alpha+2\eps$ and $\B(x_1)=\eps^2$ (which is the system optimum) and $x_2$  with $\A(x_2)=2\eps$ and $\B(x_2)= 1-\alpha+\eps+\eps^2$.

It can be easily checked that $x_2$ is a maximin fair solution and also a Kalai-Smorodinski solution (i.e. $x_2= x_{MM}=x_{KS}$), while no proportionally fair solution exists for this instance.
This yields
$$\min\{\pof_{MM}(\alpha),\pof_{KS}(\alpha) \} \geq  \frac{\alpha+2\eps +\eps^2- (1-\alpha+3\eps+\eps^2)}{\alpha+2\eps+\eps^2} \to \frac{2\alpha -1}{\alpha} = 2- \frac 1 \alpha.$$
\end{example}
Note that for $\alpha \to 1$ the bound and the instance of Example~\ref{ex:ssppoflarge} tend to those of Example~\ref{ex:ssppof}.

\smallskip
The following example covers the case $\alpha \in (0, 2/3]$.
\def\k{r}
\begin{example}\label{ex:pof_small_PF}
Let agent $A$ own $\k$ items of weight $1/\k$ and $B$ own $\k$ items of weight $\eps$.
In this case there are only two Pareto efficient solutions, namely $x_1=\opt$ which is the system optimum with values $\A^*=1$ and $\B^*=0$, and $x_2$ with values  $\A(x_2)=\frac {\k-1} \k $ and $\B(x_2)=\k \eps$. It is easy to show that $x_2$ is a  fair solution in all three settings, i.e. $x_2= x_{MM}=x_{KS}=x_{PF}$.
For  $\eps \to 0$ we get:
$$\min\{\pof_{MM}(1/\k),\pof_{KS}(1/\k),\pof_{PF}(1/\k) \}\geq
\frac{1-\frac {\k-1} \k }{1}= \frac 1 \k.$$
Hence, we can state that for every $\alpha$ with
$\frac{1}{h} \leq  \alpha < \frac{1}{h-1} $, $h\geq 2$ and integer,
$$\min\{\pof_{MM}(\alpha),\pof_{KS}(\alpha),\pof_{PF}(\alpha) \} \geq
\frac{1}{h} = \frac{1}{ \lceil \frac 1 {\alpha} \rceil }.$$
\end{example}
From the bound of Example~\ref{ex:pof_small_PF} when $\k=2$,
we get $\pof_{MM}(\alpha) \geq 1/2$ for $\alpha \geq 1/2$.
Note that for $\alpha =2/3$ this matches the lower bound of Example~\ref{ex:ssppoflarge}.

 In the following theorem we  show that the bounds of Examples~\ref{ex:ssppoflarge} and \ref{ex:pof_small_PF}
 for the maximin fairness concept
are worst possible when $\alpha \geq 1/2$, i.e.\ $\pof_{MM}(\alpha)$ cannot be larger than the lower bounds provided by those examples. In Figure~\ref{fig:poftrendsseparate} the function $\pof_{MM}(\alpha)$, or the corresponding upper and lower bounds when $\alpha \leq 1/2$, are plotted
for the separate items sets case.
\begin{theorem}\label{th:ssppof}
FSSP with separate item sets and an upper bound $\alpha$ on the maximum item weight
has the following Price of Fairness for maximin fair solutions:
\begin{eqnarray}
\pof_{MM}(\alpha) & = & 2- 1/\alpha  \quad\mbox{ for } 2/3 \leq \alpha \leq 1 \label{eq:poflarge}\\
\pof_{MM}(\alpha) & =  & 1/2  \quad\mbox{ for } 1/2 \leq \alpha < 2/3\label{eq:pofmedium}\\
\frac{1}{ \lceil \frac 1 {\alpha} \rceil } \ \leq\
\pof_{MM}(\alpha) & \leq & \alpha  \quad\mbox{ for } \alpha < 1/2\label{eq:pofsmall}
\end{eqnarray}
\end{theorem}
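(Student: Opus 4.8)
The plan is as follows. The three lower bounds are already furnished by the examples in the text: Example~\ref{ex:ssppof} gives $\pof_{MM}(1)\ge1$, Example~\ref{ex:ssppoflarge} gives $\pof_{MM}(\alpha)\ge 2-1/\alpha$ on $[2/3,1)$, and Example~\ref{ex:pof_small_PF} gives $\pof_{MM}(\alpha)\ge 1/\lceil 1/\alpha\rceil$ for every $\alpha$ (in particular $\pof_{MM}(\alpha)\ge 1/2$ for all $\alpha\ge1/2$, taking $r=2$). The upper bound $\pof_{MM}(\alpha)\le\alpha$ for $\alpha<1/2$ is Lemma~\ref{lem:lessthanalpha}. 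Hence it remains to prove the two upper bounds $\pof_{MM}(\alpha)\le 2-1/\alpha$ for $\alpha\in[2/3,1]$ and $\pof_{MM}(\alpha)\le 1/2$ for $\alpha\in[1/2,2/3)$; equivalently, I will show that for every $\alpha\ge1/2$ and every instance with all items bounded by $\alpha$,
\[
\z(x_{MM})\ \ge\ \min\Bigl\{\tfrac{1-\alpha}{\alpha},\,\tfrac12\Bigr\}\,\z^{*},
\]
which together with the lower bounds yields equality on the first two ranges.

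First I would fix such an instance, a Pareto efficient maximin solution $x_{MM}$ with maximin value $f=\min\{a(x_{MM}),b(x_{MM})\}$, and a system optimum $\opt$ with $\z^{*}=a(\opt)+b(\opt)\le1$. We may assume $\z(x_{MM})<\z^{*}$ (else the instance contributes nothing); then, just as in the proof of Lemma~\ref{lem:lessthanalpha}, no unselected item fits into $x_{MM}$, so $\z(x_{MM})>1-\alpha$. Also $a(x_{MM}),b(x_{MM})\ge f$ together with $a(x_{MM})+b(x_{MM})\le1$ give $f\le\tfrac12$. By definition of the maximin value $\min\{a(\opt),b(\opt)\}\le f$, and since the class of instances is closed under exchanging the item sets of the two agents we may assume $b(\opt)\le a(\opt)$, so $b(\opt)\le f\le b(x_{MM})$. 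Hence
\[
\z^{*}-\z(x_{MM})=\bigl(a(\opt)-a(x_{MM})\bigr)+\bigl(b(\opt)-b(x_{MM})\bigr)\ \le\ a(\opt)-a(x_{MM}),
\]
so $\pof_{MM}(\alpha)\le\frac{a(\opt)-a(x_{MM})}{\z^{*}}$; moreover $a(\opt)>a(x_{MM})$, for otherwise $\z^{*}\le\z(x_{MM})$.

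The next step is a case distinction on $\z^{*}$. If $\z^{*}\le\alpha$, then $\frac{\z(x_{MM})}{\z^{*}}>\frac{1-\alpha}{\alpha}\ge\min\{\tfrac{1-\alpha}{\alpha},\tfrac12\}$ and we are done. The substantial case is $\z^{*}>\alpha$, where, all items being at most $\alpha$, the optimum uses at least two items. The engine is Pareto efficiency of $x_{MM}$: it forces $a(x_{MM})=\max\{a(S):a(S)\le 1-b(x_{MM})\}$ over $A$-subsets $S$, since a larger such $S$ paired with $x_{MM}^B$ would dominate $x_{MM}$; it also forces $a(\opt)>1-b(x_{MM})$, for otherwise the $A$-part of $\opt$ would fit alongside $x_{MM}^B$ and give $a(x_{MM})\ge a(\opt)$. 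As agent $A$'s items total at least $a(\opt)>1-b(x_{MM})$, greedily packing $A$-items up to capacity $1-b(x_{MM})$ --- each added item weighing at most $\alpha$ --- produces a feasible $A$-subset of weight more than $1-b(x_{MM})-\alpha$, whence $a(x_{MM})>1-b(x_{MM})-\alpha$. Combining this with the symmetric estimate for $B$ and with the maximin optimality of $f$ --- which says that none of the feasible solutions assembled from the two parts of $\opt$ and from their greedy truncations can have both utilities exceed $f$ --- one bounds $a(\opt)-a(x_{MM})$ by $\rho(\alpha)\,\z^{*}$, where $\rho(\alpha)=2-1/\alpha$ on $[2/3,1]$ and $\rho(\alpha)=1/2$ on $[1/2,2/3)$; with the reduction of the previous paragraph this is exactly the asserted bound.

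I expect this last estimate to be the main obstacle. The subtle point is that the maximin value $f$ can itself be arbitrarily small --- in the extremal instances of Example~\ref{ex:ssppoflarge} one has $f\to0$ --- so the efficiency loss cannot be controlled through $f$ alone and must be recovered from the larger of the two utilities of $x_{MM}$, which is why one also needs the greedy/Pareto estimate on that utility. The item cap $\alpha$ then enters twice in the bookkeeping: once to bound the unused capacity $1-\z(x_{MM})<\alpha$ of $x_{MM}$, and once to bound how far the greedy packing of agent $A$ may fall short of $1-b(x_{MM})$. It is precisely the combination of these two losses, each at most $\alpha$, that makes $\alpha=2/3$ the threshold at which the bound switches from $2-1/\alpha$ to the constant $1/2$; a short sub-case analysis according to the relative sizes of $a(\opt)$, $1-f$ and $b(x_{MM})$ separates the two regimes, and the matching lower bounds then give equality on the first two ranges.
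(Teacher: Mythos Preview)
Your framework is sound—the lower bounds and the case $\alpha<1/2$ are correctly attributed—but the core upper bound argument for $\alpha\ge 1/2$ has a genuine gap. The sentence ``one bounds $a(\opt)-a(x_{MM})$ by $\rho(\alpha)\,\z^{*}$'' is a claim, not a derivation, and the estimates you have assembled do not yield it. Your greedy/Pareto estimate $a(x_{MM})>1-b(x_{MM})-\alpha$ and its ``symmetric estimate for $B$'' are both equivalent to $\z(x_{MM})>1-\alpha$, which only returns $\pof_{MM}(\alpha)\le\alpha$ (i.e.\ Lemma~\ref{lem:lessthanalpha} again). For $\alpha\ge 2/3$ you need the strictly stronger $\z(x_{MM})\ge\frac{1-\alpha}{\alpha}\,\z^*$, and nothing further is actually extracted from the maximin constraint in your writeup; the ``short sub-case analysis'' that is supposed to separate the two regimes is announced but never carried out.

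The paper obtains the missing leverage by working with a \emph{specific} item rather than a generic greedy packing. One may assume the largest item, say $a_1$, has weight exactly $\alpha$ and (after disposing of the easy case where it lies in $x_{MM}$) lies in $\opt$; write $a(\opt)=\alpha+\delta$. Then $\delta\le 1-\alpha$, and once the subcase $b(x_{MM})\ge\alpha$ (which already gives $\z(x_{MM})\ge 1/2$) is set aside, the remaining capacity satisfies $1-b(x_{MM})>1-\alpha\ge\delta$, so the \emph{entire} $\delta$-portion of $a(\opt)$ fits next to $x_{MM}^B$ and hence $a(x_{MM})\ge\delta$. Combined with $a(x_{MM})\ge f\ge b(\opt)$ and $b(x_{MM})>1-\alpha$, this yields $\z(x_{MM})\ge\max\{\delta,b(\opt)\}+(1-\alpha)$, and the resulting ratio $\frac{2\alpha-1+\min\{\delta,b(\opt)\}}{\alpha+\delta+b(\opt)}$ is shown by elementary algebra to be at most $2-1/\alpha$ for $\alpha\ge 2/3$ and at most $1/2$ for $1/2\le\alpha<2/3$. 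The key point your argument misses is that removing the \emph{single} item of maximal weight $\alpha$ from $a(\opt)$ always lands below the available capacity, whereas your greedy removal may have to discard several items and then only guarantees landing above $1-b(x_{MM})-\alpha$, a bound that does not depend on $a(\opt)$ and is therefore too weak to recover $2-1/\alpha$.
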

\begin{proof}
The case  $\alpha < 1/2$ follows from Lemma~\ref{lem:lessthanalpha},
thus proving \eqref{eq:pofsmall}
with the lower bound given by Example~\ref{ex:pof_small_PF}.

\smallskip
We now consider the case $\alpha \geq 1/2$ and prove upper bounds  (\ref{eq:poflarge})
and  (\ref{eq:pofmedium}).
The corresponding matching lower bounds were given in Example~\ref{ex:ssppoflarge}
and \ref{ex:pof_small_PF} (take $r=2$).
We assume without loss of generality  that $a_1=\alpha$.
If the fair solution $x_{MM}$ includes an item with weight $\alpha$, we have $\z(x_{MM})\geq 1/2$ and thus $\pof_{MM}(\alpha) \leq 1/2$.
Hence, we assume that  $\opt$ includes $a_1$ since otherwise
neither $x_{MM}$ nor $\opt$ would include the largest item
and we could remove it from consideration\footnote{Note that this is not possible
for $x_{KS}$ where the largest item might contribute to $\bestA$.}.
%
Now we consider two cases:
\begin{itemize}

\item {\em Case}
$\B(x_{MM}) \leq 1-\alpha:$ In this case, it is feasible to include $a_1$ in $x_{MM}$ and thus
$\z(x_{MM})\geq 1/2$ and $\pof_{MM}(\alpha) \leq 1/2$.

\item {\em Case}
$\B(x_{MM}) > 1-\alpha:$
Let $\A(\opt)= a_1 + \delta$ for some residual weight $\delta\geq 0$.
We can assume that $\B(x_{MM}) < \alpha$, since otherwise we would have again $\z(x_{MM}) \geq \alpha\geq 1/2$ thus implying the thesis.
This means that there is enough capacity for $A$ to pack at least $\delta$ also in the fair solution,
i.e.\ $\A(x_{MM}) \geq \delta$.
Now we can distinguish two bounds on the fair solution.

Assume first that
\begin{equation}\label{eq:bnd1}
\z(x_{MM}) \geq \delta + 1 - \alpha.
\end{equation}
Since $\A(\opt) \geq 1/2$, it must be $\B(\opt) \leq \A(\opt)$ and thus $\B(x_{MM}) \geq \B(\opt) $,
but also $\A(x_{MM})\geq \B(\opt) $.		

Secondly, assume that
\begin{equation}\label{eq:bnd2}
\z(x_{MM}) \geq \B(\opt) + 1 -\alpha.
\end{equation}
If we combine (\ref{eq:bnd1}) and (\ref{eq:bnd2})
and define $u:= \max\{\delta, \B(\opt)\}$ and $v:= \min\{\delta, \B(\opt)\}$,
 we have the following:
\begin{equation}\label{eq:eq}
\pof_{MM}(\alpha) \leq \frac{\alpha+\delta+\B(\opt) - (u+1 -\alpha)}{\alpha+\delta+\B(\opt)}
=\frac{2\alpha -1 +v}{\alpha+\delta+\B(\opt)}.
\end{equation}
By elementary algebra it is easy to observe that showing that  $\pof(\alpha) \leq \frac{2\alpha-1}{\alpha}$ is
equivalent to showing
$$(2\alpha-1) \delta + (2 \alpha-1) \B(\opt) \geq \alpha v.$$
This last expression is true for $\alpha \geq 2/3$ by the definition of $v$.
\end{itemize}
Finally, for the case $1/2<\alpha < 2/3$ it can be easily shown that the desired upper bound
of $1/2$ is obtained from (\ref{eq:eq}).
\end{proof}
Since $\pof_{PF} (\alpha)\leq \min\{ 1/2, \pof_{MM}(\alpha)\}$ when a proportional fair solution exists (Theorem \ref{th:prop_fair_k} and Corollary \ref{th:pofPFlessthanpofMM}), we get the following result (see Example~\ref{ex:pof_small_PF}
for the tightness of $1/2$).

\begin{corollary}\label{th:cor_pofPFlessMM}
FSSP with separate item sets and an upper bound $\alpha$ on the maximum item weight
has the following Price of Fairness for proportional fair solutions:
\begin{eqnarray}
\pof_{PF} (\alpha) & = & 1/2  \quad\mbox{ for } 1/2 \leq \alpha \label{eq:pofKSmedium}\\
\pof_{PF}(\alpha) & \leq & \alpha  \quad\mbox{ for } \alpha < 1/2\label{eq:pofKSsmall}
\end{eqnarray}
\end{corollary}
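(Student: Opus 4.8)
This statement is a corollary, so the plan is to assemble results already established and do the bookkeeping range by range, pairing in each case an upper bound from the general machinery with a matching lower bound from an explicit instance family. Two observations I would use throughout: $\pof_{PF}(\cdot)$ is monotonically increasing in $\alpha$ (as noted after \eqref{eq:def_pof_alpha}), and an instance admitting no proportional fair solution contributes nothing to the supremum in \eqref{eq:def_pof_alpha}, so it suffices to reason about instances on which $x_{PF}$ exists --- exactly the hypothesis under which Theorem~\ref{th:prop_fair_k} and Corollary~\ref{th:pofPFlessthanpofMM} apply.

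For $\alpha \geq 1/2$ I would get the upper bound $\pof_{PF}(\alpha) \leq 1/2$ immediately from Theorem~\ref{th:prop_fair_k} with $k=2$: on every instance possessing a proportional fair solution the relative utility loss is at most $(k-1)/k = 1/2$, a bound independent of $\alpha$. For the matching lower bound I would invoke Example~\ref{ex:pof_small_PF} with $r=2$, which provides for each $\eps>0$ an instance in $\mathcal{I}_{1/2}$ on which the proportional fair solution $x_2$ satisfies $\pof_{PF} \geq 1 - \tfrac12 - 2\eps \to \tfrac12$. Since $\mathcal{I}_{1/2} \subseteq \mathcal{I}_{\alpha}$ for all $\alpha \geq 1/2$, monotonicity gives $\pof_{PF}(\alpha) \geq \pof_{PF}(1/2) \geq 1/2$, and together with the upper bound this is \eqref{eq:pofKSmedium}.

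For $\alpha < 1/2$ the upper bound $\pof_{PF}(\alpha) \leq \alpha$ is fastest to obtain from Lemma~\ref{lem:lessthanalpha}: a proportional fair solution is in particular Pareto efficient, so the lemma's bound on the Price of Fairness of any Pareto efficient solution applies directly; alternatively one can chain $\pof_{PF}(\alpha) \leq \pof_{MM}(\alpha) \leq \alpha$ via Corollary~\ref{th:pofPFlessthanpofMM} and \eqref{eq:pofsmall} of Theorem~\ref{th:ssppof}. This gives \eqref{eq:pofKSsmall}.

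No step here is genuinely difficult; the only points needing care are (i) recording explicitly that, for $\pof_{PF}$, the supremum in \eqref{eq:def_pof_alpha} effectively ranges only over instances that possess a proportional fair solution, so the conditional results may be invoked, and (ii) double-checking Example~\ref{ex:pof_small_PF} with $r=2$ --- that all item weights are at most $1/2$ and that the solution $x_2$ there is indeed proportional fair --- so that the value $1/2$ is genuinely approached within $\mathcal{I}_{1/2}$ and the bound in \eqref{eq:pofKSmedium} is tight.
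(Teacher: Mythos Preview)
Your proposal is correct and follows essentially the same route as the paper: the upper bound $1/2$ for $\alpha\geq 1/2$ comes from Theorem~\ref{th:prop_fair_k} with $k=2$, the upper bound $\alpha$ for $\alpha<1/2$ from Lemma~\ref{lem:lessthanalpha} (or equivalently via Corollary~\ref{th:pofPFlessthanpofMM} and \eqref{eq:pofsmall}), and tightness at $1/2$ from Example~\ref{ex:pof_small_PF} with $r=2$. Your added care about monotonicity and about the supremum effectively ranging only over instances admitting $x_{PF}$ is sound bookkeeping that the paper leaves implicit.
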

We conclude this section by providing upper bounds on the Price of Fairness for Kalai-Smorodinski fair solutions. Note that these worst case bounds have the same values as those for maximin fair solutions, even though the proof is quite different. As for Theorem~\ref{th:ssppof}, Figure~\ref{fig:poftrendsseparate} illustrates the function $\pof_{KS}(\alpha)$ for the separate items sets case. Recall that it was established by Examples~\ref{ex:ks_not_pf} and~\ref{ex:ks_betterthan_pf}
that in general the utilities reached for the two fairness concepts have no dominance relations.
\begin{theorem}\label{th:ssppofks}
FSSP with separate item sets and an upper bound $\alpha$ on the maximum item weight
has the following Price of Fairness for Kalai-Smorodinski fair solutions:
\begin{eqnarray}
\pof_{KS}(\alpha) & = & 2- 1/\alpha  \quad\mbox{ for } 2/3 \leq \alpha \leq 1
\label{eq:poflargeks}\\
\pof_{KS}(\alpha) & = & 1/2  \quad\mbox{ for } 1/2 < \alpha < 2/3\label{eq:pofmediumks}\\
\frac{1}{ \lceil \frac 1 {\alpha} \rceil } \ \leq\
\pof_{KS}(\alpha) & \leq & \alpha  \quad\mbox{ for } \alpha \leq 1/2\label{eq:pofsmallks}
\end{eqnarray}
\end{theorem}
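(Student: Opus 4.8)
The plan is to mirror the structure of the proof of Theorem~\ref{th:ssppof}, but to replace every argument about maximin feasibility by the corresponding argument about the ratios $\A(x)/\bestA$ and $\B(x)/\bestB$. First I would dispose of the easy regime $\alpha\le 1/2$: Lemma~\ref{lem:lessthanalpha} already gives $\pof_{KS}(\alpha)\le\alpha$, and the lower bound $1/\lceil 1/\alpha\rceil$ comes from Example~\ref{ex:pof_small_PF} (where $x_{KS}$ coincides with the other fair solutions), so \eqref{eq:pofsmallks} is immediate. The matching lower bounds for the two remaining regimes are supplied by Examples~\ref{ex:ssppoflarge} and \ref{ex:pof_small_PF} (with $r=2$), so the whole theorem reduces to proving the upper bounds $\pof_{KS}(\alpha)\le 2-1/\alpha$ for $\alpha\ge 2/3$ and $\pof_{KS}(\alpha)\le 1/2$ for $1/2<\alpha<2/3$.

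For the upper bound with $\alpha\ge 1/2$, I would again set $a_1=\alpha$ (WLOG the largest item), let $\opt$ be a system optimum, and split into cases according to how much weight $x_{KS}$ allocates. The key observation is that $x_{KS}$ maximizes $\min\{\A(x)/\bestA,\ \B(x)/\bestB\}$ over all Pareto efficient solutions. If either agent already has large absolute utility in $x_{KS}$ — say $\A(x_{KS})\ge 1/2$ or $\B(x_{KS})\ge 1/2$ — then $\z(x_{KS})\ge 1/2$ and we are done; similarly if $\B(x_{KS})\ge\alpha$. So the interesting case is $\B(x_{KS})<\alpha$ and, as in Theorem~\ref{th:ssppof}, one shows there is spare capacity to let $A$ pack at least the ``residual'' $\delta:=\A(\opt)-a_1$, i.e.\ $\A(x_{KS})\ge\delta$. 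The twist relative to the maximin case, flagged in the footnote after Theorem~\ref{th:ssppof}, is that one \emph{cannot} simply remove the largest item even when $\opt$ does not use it, because $a_1$ may be exactly what realizes $\bestA$; so the argument must carry $\bestA$ and $\bestB$ explicitly. Using $\bestA\le 1$, $\bestB\le 1$ together with the defining inequality of $x_{KS}$ — namely that the common ratio value $t:=\min\{\A(x_{KS})/\bestA,\B(x_{KS})/\bestB\}$ is at least as large as the corresponding value of \emph{any} Pareto efficient competitor, in particular of $\opt$ — I would extract lower bounds on $\A(x_{KS})$ and $\B(x_{KS})$ of exactly the same shape as inequalities \eqref{eq:bnd1} and \eqref{eq:bnd2}, i.e.\ $\z(x_{KS})\ge\delta+1-\alpha$ and $\z(x_{KS})\ge\B(\opt)+1-\alpha$. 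Once those two bounds are in hand, the computation \eqref{eq:eq} with $u:=\max\{\delta,\B(\opt)\}$, $v:=\min\{\delta,\B(\opt)\}$ goes through verbatim, yielding $\pof_{KS}(\alpha)\le\frac{2\alpha-1+v}{\alpha+\delta+\B(\opt)}$, which reduces to $(2\alpha-1)(\delta+\B(\opt))\ge\alpha v$ — true for $\alpha\ge 2/3$ by definition of $v$, and giving the bound $1/2$ for $1/2<\alpha<2/3$ exactly as before.

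The main obstacle, and where the proof genuinely diverges from Theorem~\ref{th:ssppof}, is establishing the analogues of \eqref{eq:bnd1}--\eqref{eq:bnd2} without the clean ``add a feasible item'' reasoning available for maximin. The difficulty is that $\bestA$ and $\bestB$ need not be $1$, so a small $\bestB$ could in principle make the ratio constraint force $\B(x_{KS})$ to be tiny while still beating $\opt$'s ratio (whose $B$-component is also measured against the same $\bestB$); one has to argue that whenever $\B(x_{KS})$ is small, $\bestB$ is correspondingly small, so that the Pareto efficiency of $x_{KS}$ — which, as noted in the excerpt, forces $\z(x_{KS})>1-\alpha$ and in fact forces $A$ to fill its residual capacity — still delivers enough total weight. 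I expect this to require a careful case distinction on whether $\bestB\le 1-\alpha$ or $\bestB>1-\alpha$, and a separate short argument (using Pareto efficiency plus the fact that any item of $B$ has weight $\le\alpha$) that in either subcase $\A(x_{KS})+\B(x_{KS})$ meets one of the two target bounds. Everything after that point is the routine algebra already carried out in the proof of Theorem~\ref{th:ssppof}.
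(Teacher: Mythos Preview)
Your plan handles the easy parts correctly (lower bounds, the regime $\alpha\le 1/2$, and the reduction of $1/2<\alpha<2/3$ to the $\alpha\ge 2/3$ case), but the core of your proposal---mirroring the maximin argument to obtain analogues of \eqref{eq:bnd1}--\eqref{eq:bnd2}---has a genuine gap that you flag but do not close. Two concrete problems:

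\begin{itemize}
\item You write $\delta:=\A(\opt)-a_1$ and argue that $A$ can pack at least $\delta$, which presupposes $a_1\in\opt$. In the maximin proof this is justified by discarding $a_1$ when $\opt$ does not use it; as you yourself note (and as the footnote warns), that move is illegal for KS because $a_1$ may determine $\bestA$. Your proposal never says what replaces this step.
\item The second bound \eqref{eq:bnd2} in the maximin proof rests on $\A(x_{MM})\ge \B(\opt)$, which comes directly from $\min\{\A(x_{MM}),\B(x_{MM})\}\ge \min\{\A(\opt),\B(\opt)\}$. The KS inequality only gives $\A(x_{KS})/\bestA \ge \B(\opt)/\bestB$, so you obtain $\A(x_{KS})\ge \B(\opt)\cdot\bestA/\bestB$, which is useless when $\bestA<\bestB$. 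Your suggested split on $\bestB\lessgtr 1-\alpha$ does not obviously repair this: when $\bestB>1-\alpha$ and $\bestA$ is small, nothing in your outline forces $\A(x_{KS})$ to be comparable to $\B(\opt)$.
\end{itemize}

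The paper sidesteps both issues by abandoning the template of Theorem~\ref{th:ssppof} entirely. It argues by contradiction: assuming $\z(x_{KS})<\frac{1-\alpha}{\alpha}\z(\opt)\le 1/2$, every item of weight $\le 1/2$ must already sit in $x_{KS}$ (else it could be added, violating Pareto efficiency). This forces the analysis onto the at most two ``large'' items of weight $>1/2$; one shows that $\opt$ must contain a large item $a'$ of $A$, that $\B(x_{KS})>1-a'$, and then---crucially using the KS ratio inequality \eqref{eq:proofksl} together with $\A(\opt)>1/2$---that $B$ cannot own any large item, hence $\B(x_{KS})=\bestB$. From $\bestB<1/2<\bestA$ one extracts $\A(x_{KS})>\B(\opt)$ after all, and a short computation yields the contradiction. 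The point is that the paper earns the inequality $\A(x_{KS})>\B(\opt)$ only after first pinning down $\bestB=\B(x_{KS})$ via the small/large item dichotomy; your plan tries to get it directly and does not have the leverage to do so.
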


\begin{proof}
The lower bounds of \eqref{eq:poflargeks} and \eqref{eq:pofmediumks}
were given in Example~\ref{ex:ssppoflarge}
and \ref{ex:pof_small_PF} (take $r=2$).
The case  $\alpha < 1/2$ follows from Lemma~\ref{lem:lessthanalpha},
thus proving \eqref{eq:pofsmallks}
with the lower bound again given by Example~\ref{ex:pof_small_PF}.

When $\alpha > 1/2$ it is useful to partition the items into {\em small items} with weight at most $1/2$ and {\em large items} with weight greater than $1/2$.

Let us now consider the case $\alpha \geq 2/3$ and prove the upper bound (\ref{eq:poflargeks}).
By contradiction,  assume that $\pof_{KS}(\alpha) > 2- 1/\alpha$,
i.e.\
\begin{equation}\label{eq:pofksass}
\A(x_{KS}) + \B(x_{KS})< \frac{1-\alpha}{\alpha}\, (\A(\opt)+\B(\opt)) \leq \frac 1 2 \opt \leq  \frac 1 2\,.
\end{equation}
It follows that any remaining unpacked small item could be added to $x_{KS}$.
Thus, we conclude that {\em all} small items are included in $x_{KS}$.
If neither $A$ nor $B$ own a large item, the bound of $1/2$ would follow
from Lemma~\ref{lem:lessthanalpha}.
Furthermore, if $\opt$ does not contain a large item, then $x_{KS}=\opt$,
since $x_{KS}$ contains all small items.
Hence, we can assume w.l.o.g.\ that $A$ owns a large item, say $a'$,
which is contained in $\A(\opt)$,
and write $\A(\opt)=a'+ a^S > \frac 1 2$
for some weight sum $a^S$ comprising small items.

Due to \eqref{eq:pofksass} $x_{KS}$ does not contain $a'$, hence $a' + \B(x_{KS}) > 1$
because otherwise $a'$ could replace $\A(x_{KS})$.
Therefore,
\begin{equation}\label{eq:ksbboundl}
\B(x_{ks}) > 1-a' \geq 1- \alpha. 
\end{equation}
By the definition of Kalai-Smorodinski fair solutions, it must be:
\begin{equation}\label{eq:proofksl}
\min\left\{\frac{\A(x_{KS})}{\bestA}, \frac{\B(x_{KS})}{\bestB} \right\}
\geq
\min\left\{\frac{\A(\opt)}{\bestA}, \frac{\B(\opt)}{\bestB} \right\}.
\end{equation}
We can observe that
$$
\min\left\{\frac{\A(x_{KS})}{\bestA}, \frac{\B(x_{KS})}{\bestB} \right\}
\leq \frac{\A(x_{KS})}{\bestA}
< \frac{1/2}{\bestA}
< \frac{\A(\opt)}{\bestA}\,.
$$
Therefore, in the right-hand side of (\ref{eq:proofksl}) it must be
$\frac{\A(\opt)}{\bestA} > \frac{\B(\opt)}{\bestB}$.
This means that to fulfill (\ref{eq:proofksl}) we also must have
\begin{equation}\label{eq:kscase2l}
\frac{\A(x_{KS})}{\bestA} \geq \frac{\B(\opt)}{\bestB}.
\end{equation}
If also $B$ owns a large item, say $b'$, then $b'$ could replace $\B(x_{KS})$
because with assumption (\ref{eq:pofksass}) and (\ref{eq:ksbboundl}) we have:
$$\A(x_{KS}) + b' <
\left(\frac{1-\alpha}{\alpha}\, -\B(x_{KS})\right) + \alpha <
\frac{1-\alpha}{\alpha} -(1- \alpha) + \alpha =
\frac 1  \alpha +2 \alpha -2 \leq 1
$$
The last inequality holds exactly for $\alpha \in [1/2, 1]$.
Therefore, $B$ must own only small items,
which implies in turn that $\B(x_{KS})=\bestB$.

Considering the trivial bounds for the solutions the agents could obtain on their own,
namely $\bestA \geq \A(\opt) > 1/2$ and
$\bestB = \B(x_{KS}) < \frac{1-\alpha}{\alpha} \leq 1/2$, from \eqref{eq:kscase2l}, we get
\begin{equation}\label{eq:kscase22l}
\frac{\A(x_{KS})}{1/2} > \frac{\A(x_{KS})}{\bestA} \geq
\frac{\B(\opt)}{\bestB} > \frac{\B(\opt)}{1/2}
\end{equation}
which implies $\A(x_{KS}) > \B(\opt)$.

By assumption (\ref{eq:pofksass}) we have
$\A(x_{KS}) + \B(x_{KS})<  \frac{1-\alpha}{\alpha} \, (a'+a^S+\B(\opt))$.
Since $a^S < 1/2$ we know that $a^S$ together with $\B(x_{KS})$ would be a feasible solution.
Thus, it must be  $\A(x_{KS}) \geq a^S$.
Together with (\ref{eq:ksbboundl}) this means that the above assumption also implies
$$\A(x_{KS}) + (1-a')< \frac{1-\alpha}{\alpha}\, (a'+\A(x_{KS})+\B(\opt))$$
which reduces to
$$\alpha + (2 \alpha-1) \A(x_{KS}) < a' + (1-\alpha)\B(\opt).$$
But this is clearly a contradiction since $\alpha \geq a'$,
$\A(x_{KS}) > \B(\opt)$ and $(2 \alpha-1) \geq (1-\alpha)$ for $\alpha \geq 2/3$. Thus, bound \eqref{eq:poflargeks} is proven.

Since \eqref{eq:poflargeks} also means $\pof_{KS}(2/3) \leq  1/2$
and $\pof_{KS}(\alpha)$ is monotonically increasing in $\alpha$,
we immediately get the upper bound of $1/2$ also for $\alpha < 2/3$
as stated in \eqref{eq:pofmediumks}.
\end{proof}
While the bounds of Theorem~\ref{th:ssppof}, Corollary~\ref{th:cor_pofPFlessMM}
and Theorem~\ref{th:ssppofks} are tight for $\alpha \geq 1/2$,
there remains a gap for $\alpha < 1/2$ with
$\frac{1}{\lceil \frac 1 {\alpha} \rceil} \leq \pof(\alpha) \leq \alpha$.
The worst case for this interval arises for $\alpha= \frac 1 2 -\eps$
where we have
$\frac 1 3 \leq \pof(\alpha) < \frac 1 2$.
For smaller values of $\alpha$ with $\frac{1}{h} \leq  \alpha < \frac{1}{h-1}$,
$h\geq 3$ and integer, the ratio $r(\alpha)$ between upper and lower bound
on $\pof_{MM}(\alpha)$, $\pof_{PF}(\alpha)$ and $\pof_{KS}(\alpha)$ (see below)
can be bounded as follows:
\begin{equation}
r(\alpha) \leq \frac{\alpha}{\lceil \frac 1 {\alpha} \rceil}\leq
\frac {\frac{1}{h-1}}{\lceil \frac 1 h \rceil} \leq \frac{h}{h-1}
\end{equation}
This means that for smaller values of $\alpha$ the we get an almost tight
description of $\pof(\alpha)$.

\subsection{Shared item set}
\label{sec:shared}
In this section we assume that the agents $A$ and $B$
share a joint set of items $w_1, w_2, \ldots, w_n$
with $w_i \leq 1$ and $\sum_{i=1}^n w_i >1$.
Of course, each item can be assigned to at most one of the agents.

As already observed, this scenario with a shared item set is closely related to {Fair Division},
more precisely to the division of indivisible goods \cite{bib:bt1996,bib:kla10}.
However, differently from Fair Division, we consider a capacity, i.e.\
a condition that not all given items should be partitioned between $A$ and $B$,
but only a subset which can be freely chosen as long
as its total weight does not exceed the capacity.

Note that, unlike the separate items case, here once the subset
$S \subset \{w_1, w_2, \ldots\}$ of items is  chosen,
each  bipartition $(x^A, x^B)$ of $S$ corresponds to a feasible solution of our problem.
As a consequence,  there can be exponentially many distinct solutions corresponding to
the same subset $S$ and therefore returning the same global utility value $U(S)$.
In the sequel, when needed, we  specify which partition of a certain subset of items is considered as a solution.

Note also that the shared items case is a special case of the symmetric problem considered in Section \ref{sec:symmetry}, so Theorem~\ref{th:shared_pf_sameamount}
and Corollary~\ref{thm:pfsystemoptimum} hold and imply
$\pof_{PF}=0$ whenever a proportional fair solution exists.
%
%
Moreover, concerning Kalai-Smorodinsky fairness, in the shared items case
we trivially  have $x_{KS}=x_{MM}$ since $\bestA=\bestB$.
Therefore, in the following we  refer only to  maximin fair solutions.

We first present some lower bounds on $\pof$ through Examples \ref{ex:ssppoflargeshared} and \ref{ex:shared_pof_small_alpha} and then provide the matching upper bounds in Theorem \ref{th:ssppofshared}.

\smallskip
For the case with no bound on the weights ($\alpha=1$),
we can use the item set of Example~\ref{ex:ssppof} as a common ground set.
It is easy to show that a maximin fair solution has a value $\z(x_{MM}) = 3\eps$ and hence that $\pof_{MM} \to 1$ when $\eps \to 0$.
For $\alpha<1$  we give the following two examples to derive lower bounds on $\pof_{MM}(\alpha)$.

\begin{example}\label{ex:ssppoflargeshared}
Consider an instance of FSSP with shared items with $\alpha \in [2/3, 1)$.
For a small constant $\eps>0$ let the items weights be as follows.
$$
\begin{array}{lccccc}
\hline
\mbox{item}  & w_1  &  w_2 & w_3  & w_4  \tabularnewline
\hline
\mbox{weight}  &\alpha & 1-\alpha+\eps & 1-\alpha & \eps\tabularnewline
\hline
\end{array}
$$
An optimal solution $\opt$ with $\z^*=1$ consists of $\A(\opt)=\alpha$ and $\B(\opt)=1-\alpha$.
The only PO solution $x$ improving  $B$'s utility cannot select $w_1$ yielding $\A(x) = \B(x) = 1-\alpha+\eps$.
So,
$$\pof_{MM}(\alpha) \geq  \frac{1-(2-2\alpha+2\eps)}{1} \to 2\alpha -1.$$
\end{example}

\begin{example}\label{ex:shared_pof_small_alpha}
Consider an instance of FSSP with shared items with $\frac{1}{2h+1} \leq \alpha<\frac{1}{2h-1}$ for some integer $h\geq 1$ and the following items set.
$$
\begin{array}{lcc}
\hline
\mbox{item}  & w_1 = w_2 =\ldots =w_{2h+1} & \ \ w_{2h+2}=w_{2h+3} \tabularnewline
\hline
\mbox{weight}  & 1/(2h+1) & \eps \tabularnewline
\hline
\end{array}
$$
Clearly, an optimal solution $\opt$ with $\z^*=1$ consists of
$\A(\opt)=(h+1)/(2h+1)$ and $\B(\opt)=h /(2h+1)$,
while the maximin fair solution $x_{MM}$ is such that
$\A(x_{MM})=\B(x_{MM})=h/(2h+1) + \eps$.
This yields:
\begin{equation}\label{eq:lowerbound_h_shared}
\pof(\alpha) \geq \pof\left(\frac{1}{2h+1}\right) \geq
 \frac{1-\frac{2h}{2h+1} -2\eps}{1} \to \frac{1}{2h+1}
\end{equation}

Note that since $\alpha$ is an upper bound on the largest item weight of the instance, we may express the lower bound on the Price of Fairness in terms of $\alpha$. For any $\alpha \in [\frac 1 {2h+1 }, \frac 1 {2h-1})$  the lower bound is
\begin{equation}\label{eq:lowerbound_alpha_shared}
\pof(\alpha) \geq \pof\left(\frac 1 {2h+1 }\right) \geq
\frac 1 {2h+1 }\geq \frac {1} {2 \lceil \frac 1 {2\alpha}\rceil+1}.
\end{equation}
%
\end{example}
Observe that for $h=1$ Example~\ref{ex:shared_pof_small_alpha}
yields a lower bound of $1/3$ which matches the lower bound of
Example~\ref{ex:ssppoflargeshared} for $\alpha=2/3$.

The next theorem provides upper bounds on $\pof_{MM}(\alpha)$ (that match the lower bounds of Examples~\ref{ex:ssppoflargeshared} and \ref{ex:shared_pof_small_alpha} for $\alpha \geq 1/3$) for the shared items sets case, as it is shown in Figure~\ref{fig:poftrendsshared}.
\begin{theorem}\label{th:ssppofshared}
FSSP with shared item set and an upper bound $\alpha$ on the maximum item weight
has the following Price of Fairness for maximin fair solutions:
 \begin{align}
\pof_{MM}( \alpha)& =  2\alpha -1   & \mbox{ for } 2/3 < \alpha \leq 1 \label{eq:ssps1}\\
\pof_{MM}( \alpha) & =   1/3 & \mbox{ for } 1/3 < \alpha \leq 2/3\label{eq:ssps2}\\
\frac {1} {2 \lceil \frac 1 {2\alpha}\rceil+1} \ \leq \
\pof_{MM}( \alpha) & \leq    \alpha  & \mbox{ for } 0< \alpha \leq 1/3 \label{eq:ssps3}
\end{align}
\end{theorem}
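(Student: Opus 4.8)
The plan is to prove the three upper bounds separately, using the standard ``exchange argument'' that an item unselected in a Pareto efficient solution cannot have small weight, together with the symmetry/balance constraint forced by the maximin criterion. The lower bounds are already supplied by Examples~\ref{ex:ssppoflargeshared} and~\ref{ex:shared_pof_small_alpha} (taking $h=1$ for the boundary case $\alpha=2/3$), so only the upper bounds \eqref{eq:ssps1}, \eqref{eq:ssps2}, \eqref{eq:ssps3} need to be established. The case $\alpha \leq 1/3$ is immediate from Lemma~\ref{lem:lessthanalpha}, which gives $\pof_{MM}(\alpha) \leq \alpha$. Since $\pof_{MM}(\alpha)$ is monotonically increasing, once \eqref{eq:ssps1} is proven it yields $\pof_{MM}(2/3) \leq 1/3$, hence \eqref{eq:ssps2} follows for all $\alpha \leq 2/3$ as well. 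So the real work is entirely in \eqref{eq:ssps1}.

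\textbf{The main case $2/3 < \alpha \leq 1$.} Fix an instance with all weights at most $\alpha$, let $\opt$ be a system optimum (w.l.o.g.\ with $\A(\opt) \geq \B(\opt)$) and let $x_{MM}$ be a maximin fair solution. I would first argue we may assume $\z(\opt) = 1$ (otherwise scale up, which only increases $\pof$), and that $\opt$ is ``tight'' in the sense that no further item fits, i.e.\ every unselected item has weight exceeding $1 - \z(\opt) = 0$; more usefully, since $x_{MM}$ is Pareto efficient, \emph{every item not used in $x_{MM}$ must have weight greater than $1-\z(x_{MM})$} --- otherwise it could be added to the agent with smaller utility (or either agent), contradicting Pareto efficiency. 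The key observation is that the maximin value $\min\{\A(x_{MM}),\B(x_{MM})\}$ is at least $\min\{\A(\opt),\B(\opt)\} = \B(\opt)$, because one could simply take the bipartition of $\opt$'s item set, or a balanced refinement of it, as a competing feasible solution; in fact in the shared-item model one can always rebalance the set selected by $\opt$ to make the two shares as equal as possible, so $x_{MM}$ has value at least as large as the most balanced partition of any feasible item set. I would then set $S^* $ equal to the item set selected by $\opt$ and let $w'$ be its largest item; the most balanced bipartition of $S^*$ has smaller share at least $\frac{1}{2}(\z(\opt) - w') = \frac{1}{2}(1 - w') \geq \frac{1}{2}(1-\alpha)$, so $\z(x_{MM}) \geq 2 \cdot \frac{1}{2}(1-\alpha) = 1-\alpha$ as a first crude bound --- but this only gives $\pof_{MM} \leq \alpha$, not $2\alpha - 1$, so a sharper argument is needed.

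\textbf{Sharpening to $2\alpha-1$.} To get the tight bound I would combine the balance argument with the Pareto-efficiency exchange argument, exactly paralleling the structure of the proof of Theorem~\ref{th:ssppof}. Argue by contradiction: suppose $\z(x_{MM}) < \frac{1-\alpha}{\alpha}\,\z(\opt) \leq \frac{1-\alpha}{\alpha} < \frac{1}{2}$ for $\alpha > 2/3$. Then $\z(x_{MM}) < 1/2$, so no small item (weight $\leq 1/2$, or more precisely weight $\leq 1 - \z(x_{MM})$) can be left unselected --- hence $x_{MM}$ contains all items of weight at most $1-\z(x_{MM})$, and in particular all ``small'' items. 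If neither agent in $\opt$ holds a large item (weight $> 1/2$), then $\opt$ uses only small items, all of which are in $x_{MM}$, forcing $\z(x_{MM}) \geq \z(\opt)$, a contradiction. So $\opt$ contains a large item $w'$, say assigned to $A$, with $\A(\opt) = w' + a^S$ where $a^S$ collects small items; since $x_{MM}$ cannot contain $w'$ (as $\z(x_{MM})$ is too small), and $w'$ alone would give a share exceeding $\z(x_{MM})$, I would derive that placing $w'$ as one agent's entire share against a rebalanced remainder yields a feasible solution with larger maximin value unless $w' + (\text{other share}) > 1$, which combined with $w' \leq \alpha$ and $\z(x_{MM}) < \frac{1-\alpha}{\alpha}$ gives a numerical contradiction precisely when $\frac{1}{\alpha} + 2\alpha - 2 \leq 1$, i.e.\ $\alpha \in [1/2,1]$. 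The bookkeeping --- tracking how the small items $a^S$ redistribute and checking the capacity inequality $w' + a^S + \B(\opt)$-type terms stay within $1$ --- is where the care lies, but it is the same elementary-algebra manipulation that closes the proof of Theorem~\ref{th:ssppofks}.

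\textbf{Main obstacle.} I expect the main difficulty to be the shared-item subtlety that the partition of the selected set is itself a degree of freedom: unlike the separate-items case, a maximin solution is free to rebalance, so one must be careful that the ``competing solution'' built from $\opt$'s item set is genuinely feasible and genuinely more balanced, and that the exchange argument (swapping a large item out of $x_{MM}$) respects the capacity bound for \emph{some} assignment of the freed-up small items. Once the right competing solutions are identified, the inequalities reduce to the single-variable check $2\alpha - 1 \geq 1 - \alpha$ for $\alpha \geq 2/3$, which is exactly the threshold appearing in the statement.
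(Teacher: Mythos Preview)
Your high-level reductions are fine (lower bounds from the examples, the case $\alpha\le 1/3$ from Lemma~\ref{lem:lessthanalpha}, and deducing \eqref{eq:ssps2} from \eqref{eq:ssps1} via monotonicity). The problem is in your sketch of the upper bound \eqref{eq:ssps1} itself.

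\textbf{Wrong target in the contradiction hypothesis.} You assume $\z(x_{MM}) < \frac{1-\alpha}{\alpha}\,\z(\opt)$. That assumption, once contradicted, would yield $\pof_{MM}(\alpha)\le 2-1/\alpha$, which is the \emph{separate items} bound of Theorem~\ref{th:ssppof}, not the shared-items bound $2\alpha-1$ you need here. (These coincide only at $\alpha=1$; at $\alpha=2/3$ they are $1/2$ versus $1/3$.) The correct hypothesis for \eqref{eq:ssps1} is $\z(x_{MM})< (2-2\alpha)\,\z(\opt)$, which for $\alpha>2/3$ only gives $\z(x_{MM})<2/3$, not $<1/2$. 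Consequently your key step---``$\z(x_{MM})<1/2$, so every small item is already in $x_{MM}$''---no longer follows, and the rest of the argument, which you explicitly model on the proof of Theorem~\ref{th:ssppofks}, does not transfer.

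\textbf{What the paper does instead.} The paper's proof is structurally different and does not run a contradiction argument of the KS type. It splits on the weight $\bar w$ of the largest item in $\opt$. If $\bar w\ge 1/2$, an exchange argument shows that \emph{both} $\A(x_{MM})$ and $\B(x_{MM})$ are at least $1-\bar w$ (otherwise $\bar w$ could be added, beating $\opt$), so $\z(x_{MM})\ge 2-2\bar w$ and a short calculation gives $\pof_{MM}\le 2\bar w-1\le 2\alpha-1$. If $\bar w<1/2$, the paper rebuilds $\opt$ by an LPT assignment and uses the LPT gap bound $|\A(\opt)-\B(\opt)|\le \bar w$ together with $\min\{\A(\opt),\B(\opt)\}\ge \bar w$ to conclude $\z^*<\tfrac{3}{2}\z(x_{MM})$, i.e.\ $\pof_{MM}\le 1/3$. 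The two cases combine to $\pof_{MM}(\alpha)\le\max\{2\alpha-1,1/3\}$. The LPT idea in the second case is the ingredient your sketch is missing; your ``most balanced bipartition'' bound only gives $\z(x_{MM})\ge 1-\alpha$, which you already note is too weak.
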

\begin{proof}
The tight lower bounds for $\alpha \geq 1/3$ were shown by
Examples~\ref{ex:ssppoflargeshared} and \ref{ex:shared_pof_small_alpha}
(for $h=1$).
The latter example also provides the lower bound of \eqref{eq:ssps3}.

The upper bound in \eqref{eq:ssps3} for $\alpha \leq 1/3$ follows immediately from Lemma~\ref{lem:lessthanalpha}.

\smallskip
For $\alpha \in (1/3, 1]$ we proceed as follows.
Let $\z^*>\z(x_{MM})$  for a fair solution $x_{MM}$ such that
$\A(x_{MM})\geq \B(x_{MM})$.
By definition of $MM$ and Pareto efficiency we must have
\begin{equation}\label{eq:mmopt}
\max\{\A(\opt), \B(\opt)\} > \A(x_{MM}) \geq \B(x_{MM}) > \min\{\A(\opt), \B(\opt)\}.
\end{equation}
Now we  consider two cases depending
on the weight $\bar{w} \leq \alpha$ of the largest item contained in a system optimal solution $\opt$.
\begin{itemize}
\item {\em Case 1}: $\bar{w} \geq 1/2$.
Among the different system optima consider the one where $\A(\opt)=\bar{w}$,
while $\B(\opt)$ corresponds to the  weight of some other subset of items.
Clearly, $\B(\opt) \leq 1-\bar{w}\leq \bar{w}=\A(\opt)$ and
neither $\A(x_{MM})$ nor $\B(x_{MM})$ contains $\bar{w}$.
We have that $\A(x_{MM}) \geq  1-\bar{w}$ since otherwise,
i.e.\ if $\A(x_{MM}) <  1-\bar{w} \leq 1/2$, we could add $\bar{w}$ to
$\A(x_{MM})$ which then exceeds $\A(\opt)$.
Since also $\B(x_{MM}) > \B(\opt)$ this would constitute a solution with
better total value than $\opt$.
By a similar argument also $\B(x_{MM}) \geq 1- \bar{w}$.

Hence, $\z(x_{MM}) \geq 2-2\bar{w}$ and we get the upper bound
\begin{equation*}
\pof_{MM}(\alpha)\leq  \frac{\bar{w} + \B(\opt) -(2-2\bar{w})}{\bar{w} + \B(\opt)} =
\frac{3\bar{w} -2 + \B(\opt)}{\bar{w} + \B(\opt)}
\end{equation*}
which is increasing in $\B(\opt)$ for all $\bar{w} \leq 1$.
Thus, by plugging in the largest possible $\B(\opt)$ value, that is $\B(\opt)=1-\bar{w}$,
we obtain for $\bar{w} \leq \alpha$
\begin{equation}\label{eq:ssppofshared}
\pof_{MM}(\alpha) \leq 2\bar{w} -1 \leq 2 \alpha -1.
\end{equation}  

\item {\em Case 2}: $\bar{w} < 1/2$.
Among the different system optima consider the one built with an LPT like procedure for $P2||C_{\max}$ (see for instance \cite{gllr79}):
The items in $\opt$ are sorted in decreasing order and assigned iteratively to the
agent with current lower total weight. 
Let $\A(\opt)$ and $\B(\opt)$ indicate the values for the two agents in this solution.
Clearly, in general, it is not known which of the two values is larger.

If $\max\{\A(\opt), \B(\opt)\}= \bar{w}$ then following (\ref{eq:mmopt})
any solution $\B(x_{MM})$ with
$1/2 > \max\{\A(\opt), \B(\opt)\} > \B(x_{MM}) > \min\{\A(\opt), \B(\opt)\}$
could be used to replace and improve  $\min\{\A(\opt), \B(\opt)\}$ in $\opt$.
Hence, it must be $\max\{\A(\opt), \B(\opt)\} > \bar{w}$.
This means that according to the LPT logic, at least one additional item was
added to the agent receiving $\bar{w}$,
which can happen only after the other agent weight has exceeded $\bar{w}$.
Therefore, $\min\{\A(\opt), \B(\opt)\} \geq \bar{w}$.

By LPT we also have
$|\A(\opt)  - \B(\opt) | \leq \bar{w}$.
It follows with (\ref{eq:mmopt}) that
$$\max\{\A(\opt), \B(\opt)\} \leq \min\{\A(\opt), \B(\opt)\} + \bar{w}
 \leq 2 \min\{\A(\opt), \B(\opt)\} < 2\, \B(x_{MM}).$$
Thus, we  have
 $$\z^*= \max\{\A(\opt), \B(\opt)\} + \min\{\A(\opt), \B(\opt)\}
< 3\, \B(x_{MM}) \leq 3/2\: z(x_{MM}).$$

It follows immediately that, for $\bar{w}<1/2$, independently from $\alpha>\bar{w}$ ,
 \begin{equation}\label{eq:pofonethird}
 \pof_{MM}(\alpha)\leq 1/3.
 \end{equation}
\end{itemize}
While it is clear that for $\alpha < 1/2$ only Case~2 is feasible,
for an instance with $\alpha \geq 1/2$ either of the two cases may occur.
Hence, we can only state an upper bound as a maximum of the two:
\begin{equation}
\pof_{MM}(\alpha) \leq \max\left\{2\alpha -1, \frac 1 3\right\}
\end{equation}
which easily yields relations
\eqref{eq:ssps1} and \eqref{eq:ssps2}.
\end{proof}
\begin{figure}[b!]
    \begin{center}
        \subfigure[Separate items]{%
            \label{fig:poftrendsseparate}
            \includegraphics[width=0.45\textwidth]{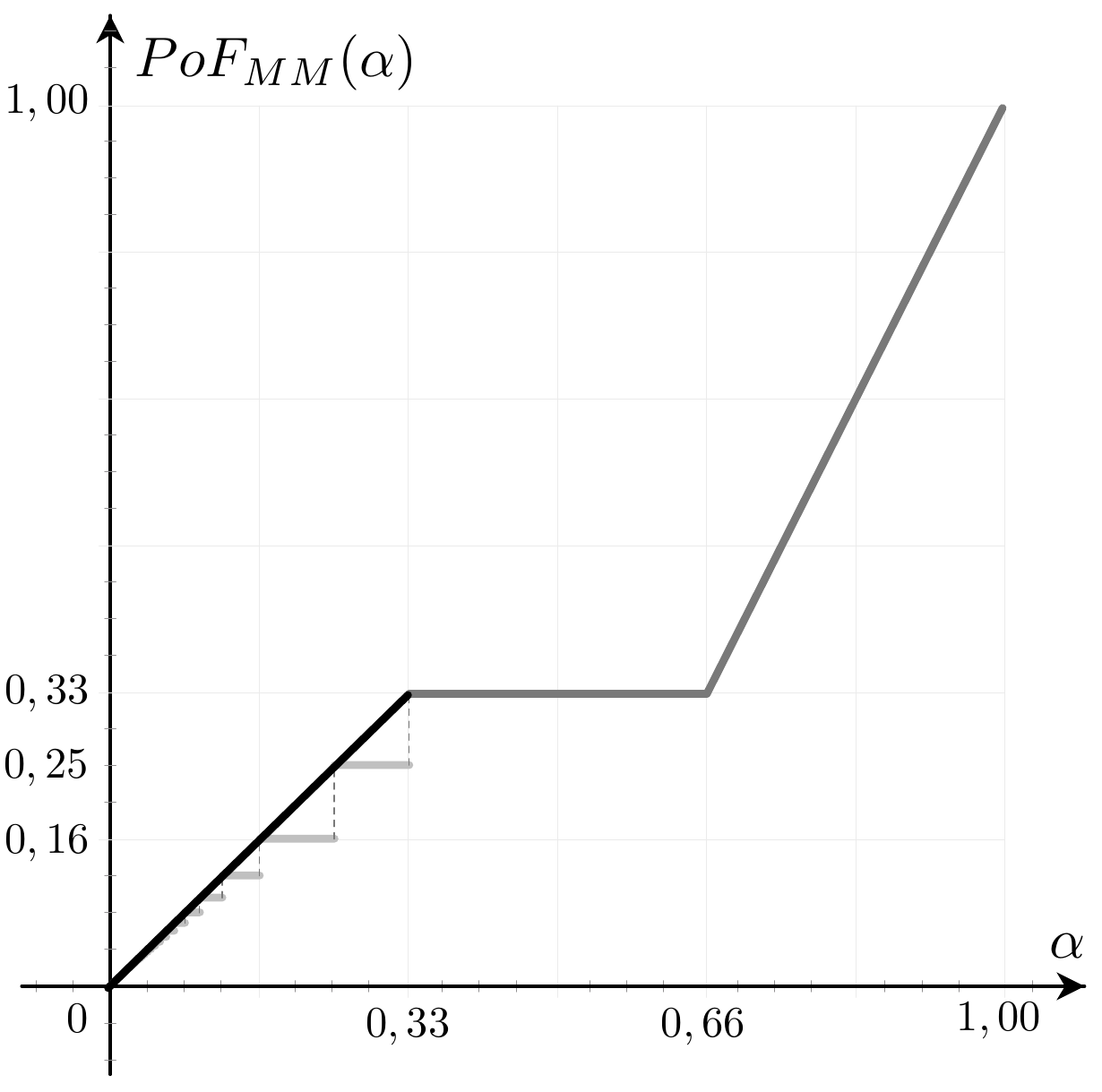}
        }
        \hfill
        \subfigure[Shared items]{%
           \label{fig:poftrendsshared}
           \includegraphics[width=0.45\textwidth]{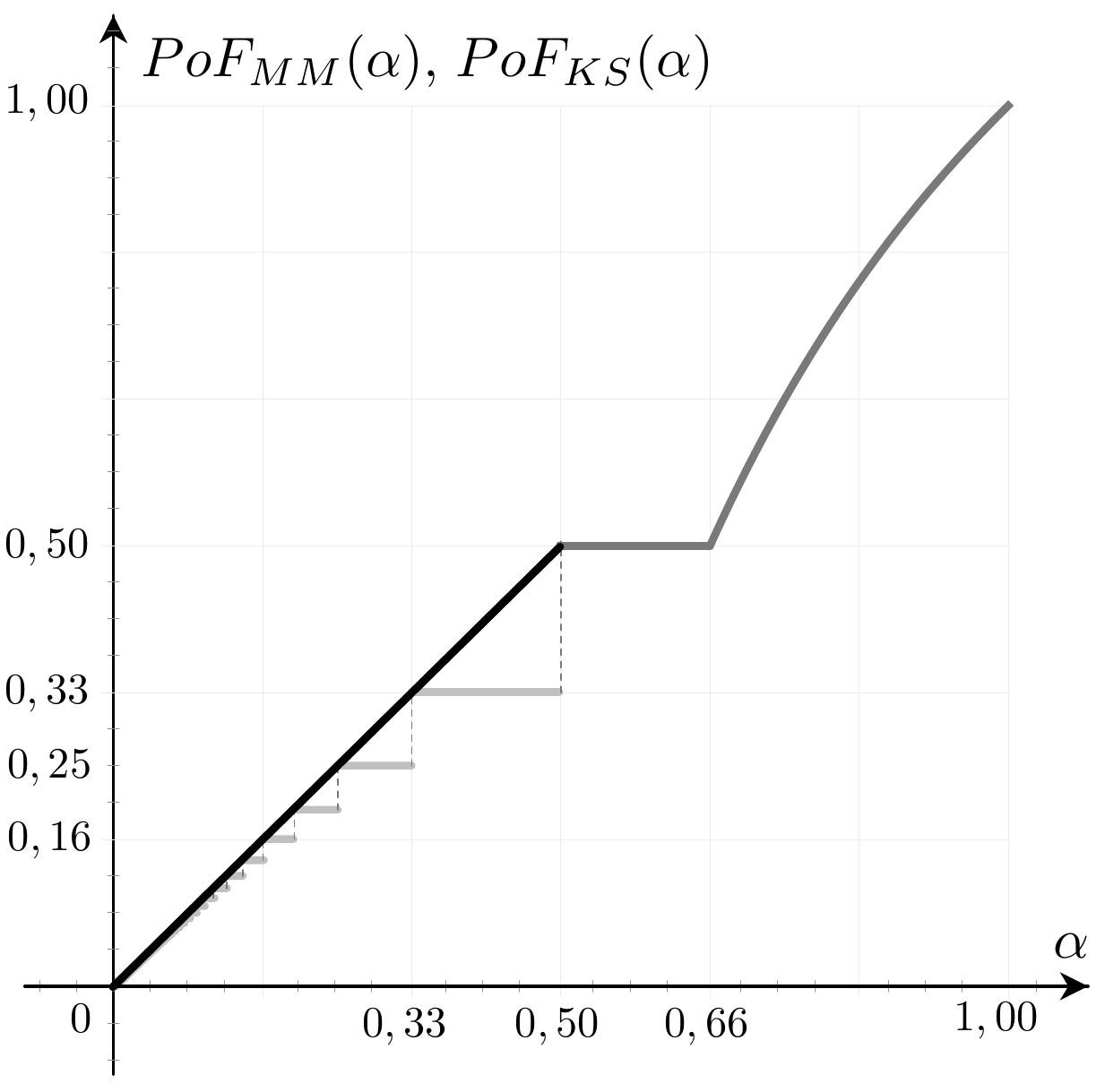}
        }
    \caption{%
        Price of Fairness functions $\pof_{MM}(\alpha)$ and $\pof_{KS}(\alpha)$ in the separate items sets case (cf. Theorems~\ref{th:ssppof} and \ref{th:ssppofks}) and $\pof_{MM}(\alpha)$ in the shared items set case (cf. Theorem~\ref{th:ssppofshared}). The corresponding lower and upper bound values are plotted, when needed.
    }
    \label{fig:poftrends}
    \end{center}
\end{figure}
For $\alpha < 1/3$, where the bound of Theorem~\ref{th:ssppofshared} is not tight,
 we can bound (as in the case of separate item sets)
the ratio $r(\alpha)$ between upper and lower bound in \eqref{eq:lowerbound_h_shared}
on $\pof_{MM}(\alpha)$ as follows:
\begin{equation}
r (\alpha ) = \alpha \cdot (2h+1) < \frac{2h+1}{2h-1}
\end{equation}
Again, this shows that for smaller values of $\alpha$
an almost tight description of $\pof_{MM}(\alpha)$ is derived.
The largest gap arises for $\alpha= 1/3 -\eps$
where $1/5 \leq \pof_{MM}(\alpha) < 1/3$.

\section{Conclusions}
\label{sec:conc}

In this paper we introduced a general allocation function to assign utilities to a set of agents.
The focus of our attention is directed on {\em fair allocations} which give a reasonable amount of utility to each agent.
A number of fairly general results holding for any multi-agent problem were derived for three different notions of fairness, namely
maximin, Kalai-Smorodinsky and proportional fairness.
In particular, we showed that for a large and meaningful class of problems
proportional fair solutions are system optimal and equitable, that is each agent receives the same utility as every other agent.

In the main part of the paper we considered a bounded resource allocation problem
which can be seen as a two-agent version of the subset sum problem
and thus is referred to as {\em Fair Subset Sum Problem} (FSSP).
We are interested in evaluating the loss of efficiency
incurred by a fair solution compared to a system optimal solution which maximizes the sum of agents utilities.
In particular, we presented several lower and upper bounds on the Price of Fairness for different versions of the  problem.

As discussed for the three notions of fairness considered in this paper,
it is in general hard to compute a fair solution,
so it would be desirable to introduce a solution concept permitting a polynomial time algorithm,
or even a simple heuristic allocation rule,
fulfilling some fairness criterion and still guaranteeing an adequate level of efficiency
(i.e.\ a certain upper bound on the Price of Fairness).

 Concerning  FSSP, it is easy to show that it is binary NP-hard to recognize fair solutions (for  all three fairness concepts). In fact,  if  all item weights and the capacity $c$ are integers, it is possible  to design  dynamic programming algorithms running in pseudopolynomial time to find all PO solutions in the separate and shared items cases. The algorithms are briefly sketched hereafter.
For separate items, we may define two dynamic arrays $d^A[w]$, $d^B[w]$, $w=0,1,\ldots, c$,
with binary entries, where e.g.\ $d^A[w]=1$ indicates that a solution with weight $w$ exists for agent $A$.
The entries can be easily computed by iteratively considering each item, say $a_i$,
and setting $d^A[w+a_i]=1$ if $d^A[w]=1$.
Finally, we go through both arrays in opposite directions and identify all
Pareto efficient combinations of weights $w, v$ with $w+v\leq c$ and $d^A[w]=1$ and $d^B[v]=1$. This takes $O(nc)$ time.
For shared items, a two-dimensional array is required, where $d[w,v]=1$
if a solution with weight $w$ for $A$ and $v$ for $B$ exists.
It is updated for each item $w_i$ by observing that each entry with $d[w,v]=1$
implies that both $d[w+w_i,v]=1$ and $d[w,v+w_i]=1.$
Thus, all reachable solutions can be determined in $O(nc^2)$ time.
More details, e.g.\ about storing the set of items for each entry, can be found
in~\cite[Sec.~2.3]{kpp04}.

Finally, a natural generalization of the FSSP, with significant applications in several real-world scenarios such as Project Management and Portfolio Optimization, would consider  a different utility function associated to  profits, thus defining a multi-agent (fair) knapsack problem.

\subsection*{Acknowledgements}

  Gaia Nicosia and Andrea Pacifici have been partially supported by
  Italian MIUR projects PRIN-COFIN n.~2012JXB3YF 004 and
  n.~2012C4E3KT 001.\\
  Ulrich Pferschy was supported by the Austrian
  Science Fund (FWF): 
  P 23829-N13.


\newpage

\section{Appendix}

\noindent
\textbf{Theorem \ref{thm:PFequal}}
\emph{If two proportional fair solutions $x_{PF}$ and $y_{PF}$  exist, then $u_j(x_{PF})=u_j(y_{PF})$ for all $j=1,\ldots, k$.}

\medskip
\noindent
\begin{proof}
Let $x_{PF}$ and $y_{PF}$ be two proportional fair solutions.
By definition of proportional fairness and using equation \eqref{eq:def_propfair}
for both  $x_{PF}$  and $y_{PF}$,  we obtain
$\sum_{j=1}^k \frac{u_j(y_{PF})}{u_j(x_{PF})} \leq k$ and $\sum_{j=1}^k \frac{u_j(x_{PF})}{u_j(y_{PF})} \leq k$.
Let $\phi_j= \frac{u_j(x_{PF})}{u_j(y_{PF})}$ for $j=1,\ldots k$, clearly $\phi_j\geq 0$. Then the two above inequalities can be rewritten as:
$\sum_{j=1}^k \frac{1}{\phi_j} \leq k$ and $\sum_{j=1}^k \phi_j \leq k$.
By summing up these last two inequalities we get that $\sum_{j=1}^k (\frac{1}{\phi_j} +\phi_j ) \leq 2k$. Moreover,  $\frac{1}{\phi_j} +\phi_j  \geq 2$ for any $\phi_j\,$.
Hence, the only possible way to satisfy $\sum_{j=1}^k (\frac{1}{\phi_j} +\phi_j  )\leq 2k$ is
$\frac{1}{\phi_j} +\phi_j  = 2$, which implies $\phi_j = 1$, for all $j=1,\ldots, k$.
\end{proof}

\medskip
\noindent
\textbf{Theorem \ref{th:nash}}
\emph{If a proportional fair solution $x_{PF}$ exist, then it maximizes the product of agents utilities, i.e.\
$$\prod_{j=1}^k u_j(x_{PF}) \geq \prod_{j=1}^k u_j(x) \quad \forall x\, \in X.$$}

\medskip
\noindent
\begin{proof}
Let $x_{PF}$ be the proportional fair solution and $y \in X$ any feasible solution. Let $\phi_j = \frac{u_j(y) }{u_j(x_{PF})}$.
By \eqref{eq:def_propfair}, recalling that the geometric mean is not
larger than the arithmetic mean, we have
$$\left(\prod_{j=1}^k \phi_j\right)^{\frac 1 k } \leq \frac 1 k  \sum_{j=1}^k \phi_j\leq 1.$$
As a consequence $\prod_{j=1}^k u_j(y) \leq \prod_{j=1}^k u_j(x_{PF})$ and  the thesis follows.
\end{proof}
%





\end{document}